\newcommand{\ud}{\mathrm{d}}
\newcommand{\ii}{\mathrm{i}}
\newcommand{\cH}{\mathcal{H}}
\theoremstyle{plain}
\newtheorem{theorem}{Theorem}[section]
\newtheorem{lemma}[theorem]{Lemma}
\theoremstyle{definition}
\newtheorem{remark}[theorem]{Remark}
\newtheorem*{remark*}{Remark}
\numberwithin{equation}{section}
\begin{document}

\title[Effective spinor dynamics in a spin-1 Bose condensate]{Effective non-linear spinor dynamics \\ in a spin-1 Bose-Einstein condensate}

\author[A.~Michelangeli]{Alessandro Michelangeli}
\address[A.~Michelangeli]{International School for Advanced Studies -- SISSA \\ via Bonomea 265 \\ 34136 Trieste (Italy).}
\email{alessandro.michelangeli@sissa.it}
\author[A.~Olgiati]{Alessandro Olgiati}
\address[A.~Olgiati]{International School for Advanced Studies -- SISSA \\ via Bonomea 265 \\ 34136 Trieste (Italy).}
\email{alessandro.olgiati@sissa.it}
	
	\begin{abstract}
	We derive from first principles the experimentally observed effective dynamics of a spinor Bose gas initially prepared as a Bose-Einstein condensate and then left free to expand ballistically. In spinor condensates, which represent one of the recent frontiers in the manipulation of ultra-cold atoms, particles interact with a two-body spatial interaction and a spin-spin interaction. The effective dynamics is well-known to be governed by a system of coupled semi-linear Schr\"{o}dinger equations: we recover this system, in the sense of marginals in the limit of infinitely many particles, with a mean-field re-scaling of the many-body Hamiltonian. When the resulting control of the dynamical persistence of condensation is quantified with the parameters of modern observations, we obtain a bound that remains quite accurate for the whole typical duration of the experiment.
	\end{abstract}

%
\vspace{2pc}
\keywords{effective non-linear evolution equations, many-body quantum dynamics, spinor Bose-Einstein condensates, partial trace, reduced density matrix, mean-field and Gross-Pitaevskii scaling, cubic NLS, coupled non-linear Schr\"odinger system}

%
%
%
\maketitle
\section{Spinor condensation and emergent effective dynamics: \\ setting of the problem}\label{sec:intro}

The study of superfluid systems with internal degrees of freedom has been a tantalising goal of cold atom physics for long. The first experimental breakthroughs were found with magnetically-trapped gases of $^{87}\mathrm{Rb}$ that turned out to be quite long lived due to the fortunate circumstance that the singlet and triplet scattering lengths have almost the same value, which greatly suppresses the spin-spin collision rate. This allows for the creation of Bose-Einstein condensates where a macroscopic occupation of particles can be driven coherently through two hyperfine levels, typically the $|F=1,m_F=-1\rangle$, the $|F=2,m_F=2\rangle$, or the $|F=2,m_F=1\rangle$ states. Such systems are customarily referred to as \emph{quasi-spinor condensates}: only a restricted number of hyperfine levels contribute effectively to the experiment, through the coupling with a resonant external magnetic field (Rabi coupling).

In contrast, a highly off-resonant magnetic confinement can trap the atoms irrespectively of their hyperfine state: in this case the spin becomes a new degree of freedom and this produces interacting Bose gases of ultra-cold atoms where the spatial two-body interaction is mediated by a spin-spin coupling. When such systems exhibit a macroscopic occupation of the same one-body state, the order parameter being now a vector in the hyperfine spin space, one speaks of \emph{spinor condensates}. For them, the hallmark of condensation manifests as a reversible spin-changing collisional coherence between particles.

For $s$-wave interactions, the rotational symmetry of the two-body collisions among atoms of hyperfine spin $F_1$ and $F_2$ can only depend on their total spin $F_\mathrm{tot}:=F_1+F_2$ and not on its orientation: for identical interacting bosons of spin $F$, $F_\mathrm{tot}$ can take values in $\{0,2,\dots,2F\}$, thus reducing the inter-particle short-range interaction to the form
\[
\sum_{\mathrm{f}\in\{0,2,\dots,2F\}}a_{\mathrm{f}}\,P_{\mathrm{f}}\,,
\]
whenever any two particles collide, where $a_{\mathrm{f}}$ is the $s$-wave scattering length for collisions between particles with total spin $\mathrm{f}$ and $P_{\mathrm{f}}$ is the projection onto the space of total spin $\mathrm{f}$.

In the concrete case of spin-1 bosons \cite{Ho-1998} the total spin in any two-particle channel can be 0 or 2 and the above expression for the interaction can be re-written as
\begin{equation}\label{eq:spin_interaction}
c_0+c_2\,\mathbf{F_1}\cdot\mathbf{F_2}\,,
\end{equation}
where $\mathbf{F_j}=(F_j^{(1)},F_j^{(2)},F_j^{(3)})$ is the spin operator of the $j$-th particle, $j\in\{1,2\}$, and
\begin{equation}\label{eq:scatt_lenghts_comb}
c_0\;=\;{\textstyle\frac{1}{3}}(a_0+2a_2)\,,\qquad  c_2\;=\;{\textstyle\frac{1}{3}}(a_2-a_0)\,.
\end{equation}


The earliest theoretical investigations and observations of spinor condensates appeared some 20 years ago \cite{Ohmi:Machida:1998,Ho-1998,Law-Pu-Bigelow-1998}. For a modern experiment with $F=1$ $^{87}\mathrm{Rb}$ we refer to \cite{Chang-Qin-Zhang-You-2005}. By now the field has expanded through a vast series of experimental and theoretical studies, for a survey of which we refer to the comprehensive reviews \cite{Ketterle_StamperKurn_SpinorBEC_LesHouches2001, Malomed2008_multicompBECtheory, Hall2008_multicompBEC_experiments,StamperKurn-Ueda_SpinorBose_Gases_2012}. 
Such studies covered the ground state structure and the coherent spinor dynamics, among many other topics.

In order to provide firm grounds to the formal treatments available in the physical literature, in this work we present a rigorous derivation, from the `first principles' many-body linear Schr\"{o}dinger dynamics, of the well-known system of coupled non-linear equations that govern the evolution of the spinor condensate. This also complements our previous analysis of the emergent non-linear dynamics of pseudo-spinor condensates \cite{MO-pseudospinors-2017} and continues the programme outlined in \cite{AO-GPmixture-2016volume}.

We focus for concreteness on one of the most commonly observed scenarios: spin-1 Bose-Einstein condensates (with a natural extension of our statements and proofs to species with higher spin). 
Let us then set up the model as follows.

The particle number throughout this work is expressed by $N$ for some $N\in\mathbb{N}$ with $N\geqslant 2$. We denote collectively by $\pmb{\sigma}$ the symbolic vector $\pmb{\sigma}=(\sigma^{(1)},\sigma^{(2)},\sigma^{(3)})$, where
\begin{equation*}
\sigma^{(1)}\,=\,\frac{1}{\sqrt{2}}
\begin{pmatrix}
0 & 1 & 0 \\
1 & 0 & 1 \\
0 & 1 & 0 \\\end{pmatrix},\;\;
\sigma^{(2)}\,=\,\frac{-\ii}{\sqrt{2}}
\begin{pmatrix}
0 & 1 & 0 \\
-1 & 0 & 1 \\
0 & -1 & 0 \\\end{pmatrix},\;\;
\sigma^{(3)}\,=\,
\begin{pmatrix}1 & 0 & 0 \\
0 & 0 & 0 \\
0 & 0 & -1 \\\end{pmatrix}
\end{equation*}
are the usual Pauli matrices, regarded also as operators on $\mathbb{C}^3$\,. With respect to the tensor product $(\mathbb{C}^3)^{\otimes N}$ the notation  $\pmb{\sigma}_j=(\sigma^{(1)}_j,\sigma^{(2)}_j,\sigma^{(3)}_j)$ is going to be used to indicate the operator that acts as $\pmb{\sigma}$ on the $j$-th copy of the tensor product (the $j$-th spin degree of freedom), and trivially as the identity on all other copies. Analogously the notation $\pmb{\sigma}_j\bullet\pmb{\sigma}_k$, for given $j,k\in\{1,\dots,N\}$, is a short-cut for the operator
\[
\pmb{\sigma}_j\bullet\pmb{\sigma}_k\;=\;\sigma^{(1)}_j\otimes\sigma^{(1)}_k+\sigma^{(2)}_j\otimes\sigma^{(2)}_k+\sigma^{(3)}_j\otimes\sigma^{(3)}_k\,,
\]
understanding $A_j\otimes B_k$ as an operator acting non-trivially only on the $j$-th and the $k$-th copy of the tensor product space. We shall also simply write $\pmb{\sigma}_j\bullet\pmb{\sigma}_k=\sum_{\ell=1}^3\sigma^{(\ell)}_j\sigma^{(\ell)}_k$, thus omitting the tensor product symbol.

For a generic element $\phi$ of the \emph{one-body} Hilbert space
\begin{equation}
\mathfrak{h}\;:=\;L^2(\mathbb{R}^3,\ud x)\otimes\mathbb{C}^3\;\cong\; L^2(\mathbb{R}^3,\mathbb{C}^3)\,,
\end{equation}
namely a \emph{spinor} function, and for $s\in\mathbb{R}$, we shall write
\[
\phi\,\equiv\,\begin{pmatrix} u \\ v \\ w \\\end{pmatrix}\,,\qquad \|\phi\|_{H^s}^2\;=\;\|u\|_{H^s(\mathbb{R}^3)}^2+\|v\|_{H^s(\mathbb{R}^3)}^2+\|w\|_{H^s(\mathbb{R}^3)}^2
\]
where $u,v,w\in L^2(\mathbb{R}^3)$ are the components of $\phi$. Analogously, an element $\Psi_N$ of the \emph{many-body} Hilbert space
\begin{equation}
\cH_N\;:=\;\big(L^2(\mathbb{R}^3,\ud x)\otimes\mathbb{C}^3\big)^{\otimes N}
\end{equation}
can be canonically represented as a function
\[
 (\mathbb{R}^3)^N\times\{1,2,3\}^{N}\;\ni\;(x_1,\dots,x_N)\times(s_1,\dots,s_N)\;\mapsto\;\Psi_N(x_1,s_1;\cdots,x_N,s_N)
\]
that undergoes the scalar product
\begin{equation}
\begin{split}&\langle\Psi_N,\Phi_N\rangle_{\cH_N}\;:=\;\!\sum_{\substack{s_1,\dots,s_N\\\in\{1,2,3\}}}\\
	&\times\int_{(\mathbb{R}^3)^N}\ud x_1\cdots\ud x_N\,\overline{\Psi_N(x_1,s_1;\cdots;x_N,s_N)}\,\Phi_N(x_1,s_1;\cdots;x_N,s_N)
	\end{split}
\end{equation}
and for which therefore $\|\Psi_N\|_{\cH_N}^2:=\langle\Psi_N,\Psi_N\rangle_{\cH_N}<+\infty$.

Pure states of many-body systems of identical spin-1 bosons are represented by normalised vectors belonging to the bosonic Hilbert subspace
\begin{equation}
\cH_{N,\mathrm{sym}}\;:=\;\big(L^2(\mathbb{R}^3,\ud x)\otimes\mathbb{C}^3\big)^{\otimes_{\mathrm{sym}}N}
\end{equation}
of $\cH_N$, namely by $\Psi_N\in\cH_N$ with $\|\Psi_N\|_{\cH_N}=1$ that are symmetric under any exchange $(x_j,s_j)\leftrightarrow(x_k,s_k)$.

Acting on the bosonic space $\cH_{N,\mathrm{sym}}$ we consider the $N$-body mean-field Hamiltonian
\begin{equation}\label{eq:meanfieldHN}
 H_N\;:=\;\sum_{j=1}^N(-\Delta_{x_j})+\frac{1}{N}\sum_{1<j\leqslant k<N}W(x_j-x_k)+\frac{1}{N}\sum_{1<j\leqslant k<N}V(x_j-x_k)\,\pmb{\sigma}_j\bullet\pmb{\sigma}_k
\end{equation}
for given measurable $W:\mathbb{R}^3\to\mathbb{R}$ and $V:\mathbb{R}^3\to\mathbb{R}$ such that $H_N$ is unambiguously realised as a self-adjoint operator. 

In fact, what is to be assigned are two functions $W$ and $V$ such that when $N$ equals the experimental number of particle the scattering length of the potentials $\frac{1}{N}W$ and $\frac{1}{N}V$ have precisely the values $c_0$ and $c_2$ from \eqref{eq:spin_interaction}-\eqref{eq:scatt_lenghts_comb}.

The mean-field scaling in \eqref{eq:meanfieldHN} is set so as to eventually study the limit $N\to +\infty$ and to let the asymptotic dynamics emerge: although this is only a surrogate of the thermodynamic limit, it retains an amount of physical realism in the case of Bose gases at high dilution and weak interactions, and moreover it guarantees that the dynamics generated by $H_N$ remains non-trivial at any $N$, for the formal contributions of the kinetic and of the potential parts in $H_N$ have the same order $O(N)$. This also allows us to present a clean discussion of the rigorous emergence of the effective spinor dynamics, even if the technique exploited here is completely applicable also to physically more realistic and hence mathematically involved scalings, as in fact we did in a recent work \cite{MO-pseudospinors-2017} for the related model of \emph{pseudo-spinor} Bose gases. We defer a discussion of the analogue of our results for more singular scalings to the end Section \ref{sec:results}.

The class of experiments we are concerned with correspond to studying the Cauchy problem for the associated (linear) Schr\"odinger equation
\begin{equation}\label{eq:Cauchy_problem}
\begin{cases}
\;\ii \partial_t\Psi_N(t)\;=\;H_N\Psi_N(t) \\
\;\Psi_N(0)\;=\;\Psi_{N,0}
\end{cases}
\end{equation}
for a given initial datum $\Psi_{N,0}\in \cH_{N,\mathrm{sym}}$ that exhibits complete BEC asymptotically in $N$ in the usual sense of the \emph{reduced density matrix}, namely when condensation is monitored with observables relative to finitely many (fixed in $N$) particles.

More precisely, to each $\Psi_{N}\in\cH_{N,\mathrm{sym}}$, or more generally to each many-body density matrix $\gamma_N$ on $\cH_{N,\mathrm{sym}}$,  one associates the so-called one-body marginal (or one-body reduced density matrix)
\begin{equation}\label{eq:partial_trace1}
\gamma_N^{(1)}\;=\;\mathrm{Tr}_{N-1}\,\gamma_N\,,
\end{equation}
where the map $\mathrm{Tr}_{N-1}:\mathcal{B}_1(\cH_{N,\mathrm{sym}})\to\mathcal{B}_1(\mathfrak{h})$ is the \emph{partial trace} from trace class operators acting on $\cH_{N,\mathrm{sym}}$ to trace class operators acting on $\mathfrak{h}$ (we refer, e.g., to  \cite[Section 1]{M-Olg-2016_2mixtureMF} or \cite[Section 1]{MO-pseudospinors-2017} for the mathematical details). In short, 
$\gamma_N^{(1)}$ is obtained by ``tracing out'' $N-1$ degrees of freedom from $\gamma_N$, and more general $k$-body reduced density matrices $\gamma^{(k)}_N$, for integer $k <N$, are defined in a completely analogous way.

As a non-negative, bounded, and self-adjoint operator on $\mathfrak{h}$,  $\gamma_N^{(1)}$ has a complete set of real non-negative eigenvalues that sum up to 1, and thanks to the bosonic symmetry each such eigenvalue has the natural interpretation of \emph{occupation number}. Thus, complete spinor BEC of the many-body state $\Psi_N$ onto the one-body orbital $\phi_0\in\mathfrak{h}$ consists by definition of the property
\begin{equation}\label{eq:initialgamma}
\begin{split}\lim_{N\to\infty}\gamma_{N}^{(1)}\;&=\;\left|\!\begin{pmatrix} u_0 \\ v_0 \\ w_0\\\end{pmatrix}\!\right\rangle\left\langle\!\begin{pmatrix}u_0 \\ v_0 \\ w_0\\\end{pmatrix}\right|\,,\qquad \phi_0=\begin{pmatrix}u_0 \\ v_0 \\ w_0\\\end{pmatrix}\,, \\
\|\phi_0\|_{\mathfrak{h}}\;=\;\|&u_0\|_{L^2(\mathbb{R}^3)}^2+\|v_0\|_{L^2(\mathbb{R}^3)}^2+\|w_0\|_{L^2(\mathbb{R}^3)}^2\;=\;1\,,\end{split}
\end{equation}
for the reduced density matrix $\gamma_N$ associated with $\Psi_N$. The bounds
\begin{equation}\label{eq:equivalent-BEC-control}
\begin{split}1-\langle\phi,\gamma_N^{(1)}\phi\rangle_{\!L^2(\mathbb{R}^3)\otimes\mathbb{C}^3}\;&\leqslant\;\mathrm{Tr}_{\!L^2(\mathbb{R}^3)\otimes\mathbb{C}^3}\big|\,\gamma_{N}^{(1)}-|\phi\rangle\langle\phi|\,\big| \\
&\leqslant\;2\sqrt{1-\langle\phi,\gamma_N^{(1)}\phi\rangle_{\!L^2(\mathbb{R}^3)\otimes\mathbb{C}^3}} \qquad \forall\phi\in\mathfrak{h}\end{split}
\end{equation}
(see, e.g., \cite[Eq.~(1.8)]{M-Olg-2016_2mixtureMF})
show that the limit \eqref{eq:initialgamma} holds equivalently in the trace norm topology or in the weak operator topology, as well as in all topologies in between. Moreover, \eqref{eq:initialgamma} is actually equivalent to the analogous limit of $k$-marginals, that is, $\lim_{N\to\infty}\gamma_{N}^{(k)}\to|\phi_0^{\otimes k}\rangle\langle\phi_0^{\otimes k}|$, for every fixed $k$ \cite{am_equivalentBEC}.

Besides being clearly satisfied in the ideal case of a completely factorised $\Psi_N=\phi_0^{\otimes N}$, condition \eqref{eq:initialgamma} holds for an ample class of many-body states where the exact factorisation is partially altered by a pattern of inter-particle correlations that are small in $N$ at the level of the reduced marginals. In fact, \eqref{eq:initialgamma} provides a control on $\Psi_N$ that is obviously much weaker than the asymptotic smallness of $\|\Psi_N-\phi_0^{\otimes N}\|_{\cH_{N}}$ and yet is physically meaningful, being based on the expectations of `real-world' one-body (or few-body) observables.

The above discussion provides the appropriate setting for a rigorous control of the dynamical problem \eqref{eq:Cauchy_problem} when at time $t=0$ the system is prepared in a many-body state $\Psi_{N,0}$ with complete spinor BEC: owing to the non-trivial potential term in $H_N$ the exact solution $\Psi_N(t)=e^{-\ii t H_N}\Psi_{N,0}$ is clearly out of reach, both analytically and numerically, but it is still relevant to qualify it in the sense of the one-body marginal $\gamma_{N}^{(1)}(t)$.

Indeed, it is an extensive experimental evidence that under suitable conditions, and with very good approximation, spinor BEC $\gamma_N^{(1)}(t)\approx|\phi(t)\rangle\langle\phi(t)|$ persists at later times onto some spinor $\phi(t)$ for the components of which it is possible to measure the spatial densities $|u(x,t)|^2$, $|v(x,t)|^2$, and $|w(x,t)|^2$ (see, e.g., \cite[Fig.~1]{Chang-Qin-Zhang-You-2005}).

What the expected equations are that govern the evolution of such densities is inferred in the physical literature by means of various heuristics, that eventually follow all from the same line of reasoning: to replace formally $W(x_1-x_2)$ and $V(x_1-x_2)$ respectively with $c_0\delta(x_1-x_2)$ and $c_2\delta(x_1-x_2)$ and to plug the formal solution $\Psi_N(t)=\phi(t)^{\otimes N}$ into the Schr\"{o}dinger equation \eqref{eq:Cauchy_problem}, more precisely into the emergent energy-per-particle functional $\frac{1}{N}\langle\Psi_N(t),H_N\Psi_N(t)\rangle_{\cH_N}$ when $N\to +\infty$, and to determine the associated `one-body' Euler-Lagrange equation. A somewhat extended discussion on such heuristics in the analogous context for condensate mixtures can be found \cite[Sec.~4]{M-Olg-2016_2mixtureMF}.

This procedure yields, in very good agreement with the experiments, a system of coupled non-linear cubic Schr\"{o}dinger equations of the type \eqref{eq:MF_system} and \eqref{eq:GP_system} considered in this work. We shell refer to them as the \emph{emergent effective dynamics for the spinor condensate}.

In this mean-field picture one therefore moves from the enormously complicated linear dynamics of a large number $N$ of particles to an effective description, reduced to only one orbital, at the price of an emergent non-linearity, which is the signature of the self-interaction of each particle of the condensate with the density of the others.

What we therefore present in this work is the rigorous derivation of such an effective dynamics, which amounts to closing the diagram
\begin{equation*}\label{scheme_for_marginals}
\begin{CD}
\Psi_N @>\scriptsize\textrm{partial trace}>>\gamma_N^{(1)} @>N\to\infty>> \left| \!\begin{pmatrix}u_0 \\ v_0 \\ w_0 \\\end{pmatrix}\right\rangle\left\langle\!\! \begin{pmatrix}u_0 \\ v_0 \\ w_0 \\\end{pmatrix}\right| \\
@ V\scriptsize\begin{array}{c} \textrm{many-body} \\ \textrm{\textbf{linear} dynamics}  \end{array} VV @V  VV               @VV\scriptsize\begin{array}{c} \\ \textrm{\textbf{non-linear}} \\ \textrm{Schr\"{o}dinger eq.} \\ {\,} \end{array}V    \\
\Psi_{\! N}(t) @>\scriptsize\textrm{\qquad\qquad\qquad\;}>>\gamma_{\! N}^{(1)}\!(t) @>N\to\infty>> \left| \!\begin{pmatrix}u(t) \\ v(t) \\ w(t)\\\end{pmatrix}\right\rangle\left\langle\!\! \begin{pmatrix}u(t) \\ v(t) \\ w(t) \\\end{pmatrix}\right|\,.
\end{CD}
\end{equation*}

\section{Main results}\label{sec:results}

Let us formulate in this Section our main assumptions and results, to which we will add a few explanatory remarks and further comments.

First, in order to monitor the displacement between the reduced density matrix $\gamma_{\Psi_{\!N}}^{(1)}$
associated with a normalised vector $\Psi_{\!N}\in\cH_{N,\mathrm{sym}}$ and the rank-one projection $|\phi\rangle\langle\phi|$ onto a spinor $\phi\in\mathfrak{h}$, and in view of \eqref{eq:initialgamma}-\eqref{eq:equivalent-BEC-control} above, we associate to $\Psi_{\!N}$ and $\phi$ the quantity
\begin{equation}\label{eq:defalpha}
\alpha_{(\Psi_{\!N},\phi)}\;:=\;1-\langle \phi,\gamma_{\Psi_{\!N}}^{(1)}\phi\rangle_\mathfrak{h}\,.
\end{equation}

We shall shorten henceforth the notation of Section \ref{sec:intro} by using subscripts to denote the time dependence, thus writing $\Psi_{N,t}$, $u_t$, etc., instead of $\Psi_N(t)$, $u(t)$, etc.

Next, let us introduce the mean-field non-linear Schr\"{o}dinger spinor system, or \emph{spinor Hartree system} for short,
\begin{equation}\label{eq:MF_system}
\begin{split}\ii\partial_t u\;&=\;-\Delta u+ (W*(|u|^2+|v|^2+|w|^2))u \\
&\qquad\quad+(V*|u|^2)u-(V*|w^2|)u+(V*(\overline{v}u))v+(V*(\overline{w}v))v \\
\ii\partial_t v\;&=\;-\Delta u+ (W*(|u|^2+|v|^2+|w|^2))v \\
&\qquad\quad+(V*(\overline{u}v))u+(V*(\overline{v}w))u +(V*(\overline{v}u))w+(V*(\overline{w}v))w  \\
\ii\partial_t w\;&=\;-\Delta u+ (W*(|u|^2+|v|^2+|w|^2))w \\
&\qquad\quad+(V*|w|^2)w-(V*|u^2|)w+(V*(\overline{u}v))v+(V*(\overline{v}w))v\end{split}
\end{equation}
in the unknowns $u\equiv u_t(x)$, $v\equiv v_t(x)$, and $w\equiv w_t(x)$.

In the notation of Section \ref{sec:intro} the system \eqref{eq:MF_system} can be re-written as
\begin{equation}\label{eq:MF_system_compact}
\begin{split}
\ii\partial_t\!\begin{pmatrix}u \\ v \\ w\\\end{pmatrix}&=\;-\Delta\!\begin{pmatrix}u \\ v \\ w\end{pmatrix}+\begin{pmatrix}(W*(|u|^2+|v|^2+|w|^2))\,u \\ (W*(|u|^2+|v|^2+|w|^2))\,v \\ (W*(|u|^2+|v|^2+|w|^2))\,w\end{pmatrix} \\
&\qquad\qquad +V*\left\langle\begin{pmatrix} u \\ v \\ w\\ \end{pmatrix},\,\pmb{\sigma}\begin{pmatrix} u \\ v \\ w\\ \end{pmatrix}\right\rangle_{\!\!\mathbb{C}^3}\!\!\bullet\,\pmb{\sigma}\begin{pmatrix} u \\ v \\ w \\\end{pmatrix},\end{split}
\end{equation}
from which it is not difficult to infer that the system is well-posed in $H^1(\mathbb{R}^3)\oplus H^1(\mathbb{R}^3)\oplus H^1(\mathbb{R}^3)$ if, for instance, $V^2\lesssim(\mathbbm{1}-\Delta)$ and $W^2\lesssim(\mathbbm{1}-\Delta)$ in the sense of forms of operators on $L^2(\mathbb{R}^3)$. The $H^1$-solution conserves the energy, namely the Hartree functional
\begin{equation}\label{eq:Hfunctional}
\begin{split}\mathcal{E}^{\mathrm{H}}\!\left[ \!\begin{pmatrix} u \\ v \\ w\\\end{pmatrix}\!\right]\;&:=\;\|\nabla u\|_{L^2(\mathbb{R}^3)}^2+\|\nabla v\|_{L^2(\mathbb{R}^3)}^2+\|\nabla w\|_{L^2(\mathbb{R}^3)}^2\\
&\quad +\frac{1}{2}\left \langle\begin{pmatrix} u \\ v \\ w\end{pmatrix},W*\big(|u|^2+|v|^2+|w|^2 \big)\begin{pmatrix} u \\ v \\ w \end{pmatrix}\right\rangle_{\!\mathfrak{h}}\\
&\quad +\frac{1}{2}\left \langle\begin{pmatrix} u \\ v \\ w\end{pmatrix}, V*\left\langle\begin{pmatrix} u \\ v \\ w \end{pmatrix},\,\pmb{\sigma}\begin{pmatrix} u \\ v \\ w \end{pmatrix}\right\rangle_{\!\!\mathbb{C}^3}\!\!\bullet\,\pmb{\sigma}\begin{pmatrix} u \\ v \\ w \end{pmatrix}\right\rangle_{\!\mathfrak{h}}.\end{split}
\end{equation}

Our main result takes the following form.

\begin{theorem}\label{thm:MFthm} Assume the following.
	\begin{itemize}
		\item[(i)] The potentials $V:\mathbb{R}^3\to\mathbb{R}$ and $W:\mathbb{R}^3\to\mathbb{R}$ are spherically symmetric and satisfy $V^2\lesssim(\mathbbm{1}-\Delta)$ and $W^2\lesssim(\mathbbm{1}-\Delta)$ in the sense of forms of operators on $L^2(\mathbb{R}^3)$.
		\item[(ii)] The normalised one-body spinor
		\begin{equation*}
		\phi_0\,=\,\begin{pmatrix} u_0 \\ v_0 \\ w_0 \\\end{pmatrix}\,,\qquad \|\phi_0\|_{\mathfrak{h}}\;=\;1
		\end{equation*}
		is given for some $u_0,v_0,w_0\in H^1(\mathbb{R}^3)$.
		\item[(iii)] For each $N\in\mathbb{N}$, $N\geqslant 2$, the initial many-body vector state $\Psi_N\in\cH_{N,\mathrm{sym}}$ satisfies $\|\Psi_N\|_{\cH_N}=1$ and exhibits complete spinor BEC onto $\phi_0$ in the quantitative sense
		\begin{equation}\label{eq:alpha_at_tzero}
		\alpha_{(\Psi_{\!N},\phi_0)}\;\leqslant\;\frac{K}{N}
		\end{equation}
		for the indicator defined in \eqref{eq:defalpha}, for some $K>0$.
		%
	\end{itemize}
	Correspondingly,
	\begin{itemize}
		\item let $t\mapsto\Psi_{\!N,t}$ be the solution to the Cauchy problem \eqref{eq:Cauchy_problem} with the initial datum $\Psi_{\!N}$ and with the many-body Hamiltonian $H_N$ defined in \eqref{eq:meanfieldHN} through the potentials $V$ and $W$ fixed by assumption (i), that is, $\Psi_{N,t}=e^{-\ii t H_N}\Psi_N$;
		\item let $t\mapsto\phi_t=\begin{pmatrix} u_t \\ v_t \\ w_t \end{pmatrix}\in\mathfrak{h}$ be the solution, with values in $H^1(\mathbb{R}^3)\otimes\mathbb{C}^3$, to the Cauchy problem consisting of the system \eqref{eq:MF_system} with the potentials $W$ and $V$ given by assumption (i) and with the initial datum $\phi_0$.
	\end{itemize}
	Then, for every $t>0$, one has
	\begin{equation}\label{eq:alpha_at_t}
	\alpha_{(\Psi_{\!N,t},\phi_t)}\;\leqslant\;\frac{K+1}{N}\,e^{\,Ct}
	\end{equation}
	and hence also
	\begin{equation}\label{eq:trace_at_t}
	\mathrm{Tr}\,\big|\gamma_{N,t}^{(1)}-\,|\phi_t\rangle\langle\phi_t|\,\big|\;\lesssim\;\frac{1}{\sqrt{N\,}}\,e^{\,C t/2}
	\end{equation}
	for some constant $C>0$ that depends only on $W$, $V$, and $\phi_0$.
\end{theorem}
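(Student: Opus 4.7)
The natural strategy is the Pickl-type \emph{counting method}, here adapted to the spinor setting. Introduce the one-body orthogonal projectors $p_t := |\phi_t\rangle\langle\phi_t|$ and $q_t := \mathbbm{1}_{\mathfrak{h}} - p_t$ on $\mathfrak{h}$, together with their lifts $p_t^{(j)}, q_t^{(j)}$ to $\cH_{N,\mathrm{sym}}$ acting non-trivially only on the $j$-th tensor factor. By the bosonic symmetry of $\Psi_{N,t}$ the indicator \eqref{eq:defalpha} satisfies $\alpha_N(t):=\alpha_{(\Psi_{N,t},\phi_t)}=\langle\Psi_{N,t},q_t^{(1)}\Psi_{N,t}\rangle_{\cH_N}$. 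The plan is to establish a Grönwall-type differential inequality of the form $\tfrac{\ud}{\ud t}\alpha_N(t)\leqslant C\bigl(\alpha_N(t)+\tfrac{1}{N}\bigr)$, with $C$ depending only on $W$, $V$, and $\phi_0$, and then to integrate it using assumption (iii) as initial condition; this yields \eqref{eq:alpha_at_t}, while \eqref{eq:trace_at_t} follows immediately from the second inequality in \eqref{eq:equivalent-BEC-control}.

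To derive the inequality one differentiates in $t$ using $\ii\partial_t\Psi_{N,t}=H_N\Psi_{N,t}$ together with $\partial_t p_t=-\ii[h_{\phi_t},p_t]$, where $h_{\phi_t}$ is the one-body nonlinear generator read off from \eqref{eq:MF_system_compact}. The kinetic contributions cancel identically and one is left with
\[
\tfrac{\ud}{\ud t}\alpha_N(t)\;=\;\ii\Bigl\langle\Psi_{N,t},\Bigl[\,\tfrac{1}{N}\!\!\sum_{1\leqslant j<k\leqslant N}\!\!\bigl(W_{jk}+V_{jk}\,\pmb{\sigma}_j\bullet\pmb{\sigma}_k\bigr)-\sum_{j=1}^N v^{\mathrm{MF}}_{\phi_t,j}\,,\,q_t^{(1)}\Bigr]\Psi_{N,t}\Bigr\rangle,
\]
with $W_{jk}:=W(x_j-x_k)$, $V_{jk}:=V(x_j-x_k)$, and $v^{\mathrm{MF}}_{\phi_t}$ the one-body nonlinear potential appearing in $h_{\phi_t}$. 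Inserting the decomposition $\mathbbm{1}=p_t^{(j)}+q_t^{(j)}$ on both sides of each two-body factor, the $p_tp_t$-sandwich of the interaction is matched by construction with the mean-field one-body subtraction, and one is left with three families of residual cross terms of type $p_tq_t$, $q_tp_t$, and $q_tq_t$. The standard Pickl combinatorics (Cauchy--Schwarz and the symmetric operator rearrangements that reduce $q^{(j)}A_{jk}q^{(k)}$-type contributions to operator norms of $A_{jk}$) then bound each family by $C(\alpha_N(t)+\tfrac{1}{N})$; the analytic input here is the Kato-type estimates $\|V(x_1-x_2)p_t^{(1)}\|_{\mathrm{op}}\lesssim\|\phi_t\|_{H^1}$ and the analogous one for $W$, which follow from assumption (i), combined with the conservation of the Hartree energy \eqref{eq:Hfunctional} and the $H^1$-well-posedness of \eqref{eq:MF_system}, yielding a uniform-in-time bound on $\|\phi_t\|_{H^1}$.

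I expect the main structural obstacle to lie in the bookkeeping imposed by the non-scalar spin interaction $V_{jk}\,\pmb{\sigma}_j\bullet\pmb{\sigma}_k$: contrary to the purely spatial mean-field case, the projector $p_t$ does not commute with the spin operators $\pmb{\sigma}_j$, since $\phi_t$ is a genuine three-component spinor rather than an eigenstate of $\pmb{\sigma}$. Consequently, the $p_tp_t$-sandwich of the spin interaction expands into a sum of scalar convolutions in the components $u_t,v_t,w_t$, and the cancellation against the mean-field subtraction has to be checked component by component \textemdash\ which is precisely the structural content of the compact rewriting \eqref{eq:MF_system_compact}. Once this matching is verified, however, the spin factor $\pmb{\sigma}_j\bullet\pmb{\sigma}_k$ is a uniformly bounded self-adjoint operator on $\mathbb{C}^3\otimes\mathbb{C}^3$ and only multiplies the estimates by a harmless overall constant, so that the remaining analysis proceeds along the same Pickl line already implemented in \cite{MO-pseudospinors-2017} for the related pseudo-spinor case.
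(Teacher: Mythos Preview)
Your proposal is correct and follows essentially the same approach as the paper: Pickl counting with a Gr\"onwall estimate on $\alpha$, the $p/q$ decomposition of the commutator after the kinetic cancellation, and uniform-in-time $H^1$ control of $\phi_t$ via conservation of the Hartree energy combined with the Kato-type bounds from assumption~(i). One small simplification worth noting: the spin-sector cancellation you anticipate having to check component by component is handled in the paper by the single operator identity $(p_t)_2\,V_{12}\,\pmb{\sigma}_1\!\bullet\!\pmb{\sigma}_2\,(p_t)_2=(p_t)_2\,V_1^{\phi_t}$, which follows directly from the definition of the smeared spinor potential $V^{\phi_t}=(V*\langle\phi_t,\pmb{\sigma}\phi_t\rangle_{\mathbb{C}^3})\bullet\pmb{\sigma}$ and is precisely why the compact form \eqref{eq:MF_system_compact} is the natural one.
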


In view of formulas \eqref{eq:initialgamma}-\eqref{eq:equivalent-BEC-control} and of the discussion made in Section \ref{sec:intro}, Theorem \ref{thm:MFthm} indeed proves the persistence in time of spinor BEC for a spin-1 Bose gas described by the mean-field model \eqref{eq:meanfieldHN}, provided spinor BEC is present at the initial time, and it does that with a quantitative control in $N$, which are going to turn into physical units in Section \ref{sect:phys}.

For the type of technique used in the proof, that we shall present in Section \ref{sect:proof}, Theorem \ref{thm:MFthm} can be established also in various modified versions where the one-body kinetic energy operator $-\Delta$ is replaced by a more general one-body Schr\"{o}dinger operator $h=-\Delta+U_{\mathrm{trap}}$, where $U_{\mathrm{trap}}:\mathbb{R}^3\to\mathbb{R}$ is a suitable confining potential, or by the semi-relativistic kinetic energy operator $h=\sqrt{\mathbbm{1}-\Delta}$, and the like. When such a generic $h$ is included in the model, the main required modifications concern on the one hand the assumption on the potentials $V$ and $W$, that must guarantee a control of the form
\[
\|V^2*|f|^2\|_{L^\infty(\mathbb{R}^3)}\;\lesssim\;\|f\|^2_{\mathcal{D}[h]}
\]
for every $f$ in the form domain $\mathcal{D}[h]$ of $h$, in analogy with assumption (i) of Theorem \ref{thm:MFthm} where $\mathcal{D}[h]=H^1(\mathbb{R}^3)$, and on the other hand the well-posedness of the corresponding Hartree system in the `energy' space $\mathcal{D}[h]$. Apart from that, the specific type of $h$ does not play a substantial role in the proof, since $h$ drops out `intrinsically' when one computes the time derivative of $\alpha_{(\Psi_{\!N,t},\phi_t)}$ and for this reason we will only treat explicitly the concrete choice of the model \eqref{eq:meanfieldHN}.

As mentioned in Section \ref{sec:intro}, the purely mean-field scaling in the Hamiltonian \eqref{eq:meanfieldHN} can be considered as a first, somewhat rough model for the spinor Bose gas -- nevertheless we shall comment in Section \ref{sect:phys} that the consistence with the experimental data is already quite satisfactory.

One can make the scaling more realistic by replacing the two-body interactions $\frac{1}{N}W(x_1-x_2)$ and $\frac{1}{N}V(x_1-x_2)$ with short-scale interactions of the form
\begin{equation}
\begin{split}
W_N(x_1-x_2)\;&:=\;N^2W(N(x_1-x_2))\,, \\
V_N(x_1-x_2)\;&:=\;N^2V(N(x_1-x_2))\,,
\end{split}
\end{equation}
the so-called Gross-Pitaevskii scaling \cite{LSeSY-ober,am_GPlim}. Now $W$ and $V$ are chosen so that when $N=N_\mathrm{exp}$, the actual number of particles in the experiment, the scattering lengths of $W_{N_\mathrm{exp}}$ and of $V_{N_\mathrm{exp}}$ equal respectively the experimental values for the quantities $c_0$ and $c_2$ introduced \eqref{eq:scatt_lenghts_comb}. 
It is simple to deduce that the scattering lengths of $W_N$ and $V_N$ amount, respectively, to $c_{W_N}=(N_\mathrm{exp}/N)c_0$ and $c_{V_N}=(N_\mathrm{exp}/N)c_2$, which shows that the limit $N\to +\infty$ describes a regime of high dilution (the scattering lengths are dominated by the mean inter-particle distance $\sim \!N^{-1/3}$). Moreover, this scaling preserves the product (density of the gas)$\times$(scattering length), indeed $N c_{W_N}=\mathrm{const.}$ and $N c_{V_N}=\mathrm{const.}$, and hence the total energy per particle remains constant for many-body states that are close to the ground state of the low-density Bose gas in a trap \cite{LY-1998}.

The resulting $N$-body spinor Hamiltonian with Gross-Pitaevskii scaling is
\begin{equation}\label{eq:GP-HN}
\begin{split}
H_N^{\mathrm{GP}}\;&:=\;\sum_{j=1}^N(-\Delta_{x_j})+N^2\!\!\!\sum_{1<j\leqslant k<N}W(N(x_j-x_k)) \\
&\qquad\qquad+N^2\!\!\!\sum_{1<j\leqslant k<N}V(N(x_j-x_k))\,\pmb{\sigma}_j\bullet\pmb{\sigma}_k\,,
\end{split}
\end{equation}
and by formally replacing $W(x)\rightarrow 8\pi c_0\delta(x)$ and $V(x)\rightarrow 8\pi c_2\delta(x)$ one can argue that at the effective level the spinor Hartree system is replaced by the spinor Gross-Pitaevskii system
\begin{equation}\label{eq:GP_system}
\begin{split}
\ii\partial_t u\;&=\;-\Delta u+ 8\pi c_0(|u|^2+|v|^2+|w|^2)u \\
&\qquad\quad+8\pi c_2(|u|^2+|v|^2-|w|^2)u+8\pi c_2(\overline{w}v)v \\
\ii\partial_t v\;&=\;-\Delta u+ 8\pi c_0(|u|^2+|v|^2+|w|^2)v \\
&\qquad\quad+8\pi c_2(|u|^2+|w|^2)v+16\pi c_2(\overline{v}w)u  \\
\ii\partial_t w\;&=\;-\Delta u+ 8\pi c_0(|u|^2+|v|^2+|w|^2)w \\
&\qquad\quad+8\pi c_2(|v|^2+|w|^2-|u|^2)w+8\pi c_2(\overline{u}\,\overline{v})v\,,
\end{split}
\end{equation}
where $c_0 N_\mathrm{exp}$ and $c_2 N_\mathrm{exp}$ are the scattering lengths of the unscaled potentials $W$ and $V$ respectively.

The system \eqref{eq:GP_system} provides indeed the emergent dynamics observed in the experiments -- see, e.g., \cite[Eq.~(1)-(3)]{Chang-Qin-Zhang-You-2005}.

The mathematical treatment of the Gross-Pitaevskii scaling is more laborious because it requires an efficient control of the short-scale structure that is induced in the course of the evolution by the Hamiltonian $ H_N^{\mathrm{GP}}$ at the spatial scale $N^{-1}$ \cite{EMS-2008}. This structure carries a negligible contribution to the $\cH$-norm of the many-body state locally in space, but a significant contribution to its total energy. For this reason one must restrict the analysis to many-body states that not only exhibit spinor BEC, but also have the correct scale of energy per particle, measured with respect to the one-body Gross-Pitaevskii spinor energy functional defined as
\begin{equation}\label{eq:GPfunctional}
\begin{split}
\mathcal{E}^{\mathrm{GP}}\!\left[ \!\begin{pmatrix} u \\ v \\ w\end{pmatrix}\!\right]\;&:=\;\|\nabla u\|_{L^2(\mathbb{R}^3)}^2+\|\nabla v\|_{L^2(\mathbb{R}^3)}^2+\|\nabla w\|_{L^2(\mathbb{R}^3)}^2\\
&\quad +4\pi c_0 \left \langle\begin{pmatrix} u \\ v \\ w\end{pmatrix},\big(|u|^2+|v|^2+|w|^2 \big)\begin{pmatrix} u \\ v \\ w \end{pmatrix}\right\rangle_{\!\mathfrak{h}}\\
&\quad +4\pi c_2\left \langle\begin{pmatrix} u \\ v \\ w\end{pmatrix}, \left\langle\begin{pmatrix} u \\ v \\ w \end{pmatrix},\,\pmb{\sigma}\begin{pmatrix} u \\ v \\ w \end{pmatrix}\right\rangle_{\!\!\mathbb{C}^3}\!\!\bullet\,\pmb{\sigma}\begin{pmatrix} u \\ v \\ w \end{pmatrix}\right\rangle_{\!\mathfrak{h}}.
\end{split}
\end{equation}

From the technical point of view, the meticulous analysis we made to prove the emergence of the Gross-Pitaevskii dynamics for pseudo-spinor condensates \cite{MO-pseudospinors-2017} can be extended to those steps of the present proof of Theorem \ref{thm:MFthm} that are new because of the spinor interaction, so as to produce the following analogous version of Theorem \ref{thm:MFthm}. We content ourselves to state it here without proof.

\begin{theorem}\label{thm:GPthm} Assume the following.
	\begin{itemize}
		\item[(i)] The potentials $V:\mathbb{R}^3\to\mathbb{R}$ and $W:\mathbb{R}^3\to\mathbb{R}$ belong to $L^\infty(\mathbb{R}^3)$, are compactly supported, spherically symmetric, and non-negative. Let $g_0$ and $g_2$ be their respective scattering lengths.
		\item[(ii)] The normalised one-body spinor
		\[
		\phi_0\,=\,\begin{pmatrix} u_0 \\ v_0 \\ w_0 \end{pmatrix}\,,\qquad \|\phi_0\|_{\mathfrak{h}}\;=\;1
		\]
		is given for some $u_0,v_0,w_0\in H^2(\mathbb{R}^3)$ so that the Cauchy problem consisting of the system \eqref{eq:GP_system} with the couplings $g_0$ and $g_2$ given in assumption (i) and with initial datum $\phi_0$ admits a unique local-in-time solution $t\mapsto\phi_t=\begin{pmatrix} u_t \\ v_t \\ w_t \end{pmatrix}$ with values in $H^2(\mathbb{R}^3)\otimes\mathbb{C}^3$.
		\item[(iii)] For each $N\in\mathbb{N}$, $N\geqslant 2$, the initial many-body vector state $\Psi_{\!N}\in\cH_{N,\mathrm{sym}}$ is given in the domain of $H_N^{\mathrm{GP}}$, satisfying $\|\Psi_{\!N}\|_{\cH_N}=1$, such that $\Psi_N$ exhibits complete spinor BEC onto $\phi_0$ in the quantitative sense
		\[
		\mathrm{Tr}\,\big|\gamma_{\Psi_{\!N}}^{(1)}-\phi_0\rangle\langle\phi_0|\,\big|\;\lesssim\;\frac{1}{\,N^{\eta_1}}
		\]
		for some $\eta_1>0$, and has the following asymptotics of the energy per particle
		\[
		\Big| \frac{1}{N}\langle\Psi_{\!N},H_N^{\mathrm{GP}}\Psi_{\!N}\rangle_{\cH_N}-\mathcal{E}^{\mathrm{GP}}[\phi_0]\Big|\;\lesssim\;\frac{1}{\,N^{\eta_2}}
		\]
		for some $\eta_2>0$. 
	\end{itemize}
	Correspondingly, let $t\mapsto\Psi_{\!N,t}$ be the solution to the Cauchy problem \eqref{eq:Cauchy_problem} with the initial datum $\phi_0$ and with the many-body Hamiltonian $H_N^{\mathrm{GP}}$ defined in \eqref{eq:GP-HN} through the potentials $V$ and $W$ fixed by assumption (i), that is, $\Psi_{N,t}=e^{-\ii t H_N}\Psi_N$.
	Then, for every $t>0$ for which $\phi_t$ exists, one has
	\[
	\lim_{N\to\infty}\gamma_{\Psi_{\!N,t}}^{(1)}\;=\;|\phi_t\rangle\langle\phi_t|.
	\]
\end{theorem}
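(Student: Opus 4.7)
The plan is to adapt to the spin-$1$ setting the energy-enhanced Pickl counting method developed by the authors in \cite{MO-pseudospinors-2017} for pseudo-spinor condensates. Let $p_t := |\phi_t\rangle\langle\phi_t|$ be the rank-one projection on $\mathfrak{h}$ associated with the effective solution of \eqref{eq:GP_system}, lift it to $p_j$ acting on the $j$-th copy of $\cH_N$, and set $q_j := \mathbbm{1}-p_j$. Introduce the two-part indicator
\[
\beta_N(t)\;:=\;\langle\Psi_{N,t},\widehat m_t\,\Psi_{N,t}\rangle_{\cH_N}\,+\,\bigl|\,\mathcal{E}^{\mathrm{GP}}[\phi_t]-\tfrac{1}{N}\langle\Psi_{N,t},H_N^{\mathrm{GP}}\Psi_{N,t}\rangle_{\cH_N}\,\bigr|\,,
\]
where $\widehat m_t=\sum_{k=0}^N m(k/N)P_k$ is built from the projectors $P_k$ onto the $k$-particle $q$-subspaces with a standard concave Pickl weight $m$ (roughly $\sqrt{\,\cdot\,}$ above $N^{-\xi}$, linear below). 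Assumption (iii) yields $\beta_N(0)\to 0$ at rate $N^{-\min\{\eta_1,\eta_2\}}$, and the inequalities \eqref{eq:equivalent-BEC-control} convert decay of $\beta_N(t)$ into the trace-norm convergence of $\gamma_{\Psi_{\!N,t}}^{(1)}$ claimed by the theorem.

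The energy summand in $\beta_N(t)$ is actually $t$-independent, by conservation of $H_N^{\mathrm{GP}}$ along the linear flow and of $\mathcal{E}^{\mathrm{GP}}$ along the $H^2$-solution $\phi_t$ (the latter justified by the local well-posedness hypothesis in (ii)), so only the counting part must be propagated. Differentiating against the Schr\"odinger equation and against the flow generating the first-quantised version of \eqref{eq:GP_system}, one obtains a commutator expression that, after decomposing each pair interaction through $p_jp_k$, $p_jq_k$, $q_jq_k$, splits into the standard Pickl terms. Here the Gross-Pitaevskii short-scale extraction of \cite{MO-pseudospinors-2017} is imported: one inserts the zero-energy scattering solutions of $W$ and of $V$ and integrates by parts against the one-body kinetic energy, so as to replace $N^2W(N\cdot)$ and $N^2V(N\cdot)\,\pmb\sigma_j\bullet\pmb\sigma_k$ by $8\pi g_0\,\delta$ and $8\pi g_2\,\delta\,\pmb\sigma_j\bullet\pmb\sigma_k$ respectively, up to remainders controlled by the kinetic energy of excitations, itself dominated by $\beta_N(t)$ through the energy summand. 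The spin sector is then accommodated by exploiting that $\pmb\sigma_j\bullet\pmb\sigma_k$ is uniformly bounded on $(\mathbb{C}^3)^{\otimes 2}$, so that every Cauchy--Schwarz step in the scalar argument survives with at most a constant-factor loss; the contraction of spin indices against the three components of $\phi_t$ built into $p_j,p_k$ reproduces exactly the off-diagonal couplings of \eqref{eq:GP_system}, such as the $8\pi g_2(\overline{w}v)v$ entries. A Gr\"onwall argument of the form $\dot\beta_N(t)\leqslant C\,\beta_N(t)+o_N(1)$ closes the estimate and yields the conclusion.

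The main obstacle I expect is the compatibility of the GP short-scale extraction with the spin interaction. The correlation structure imprinted on $\Psi_{N,t}$ at scale $N^{-1}$ is governed by $W$ and $V$ simultaneously, but these two potentials act on different spin channels and with different scattering lengths $g_0,g_2$; one has to design the multiplicative correction of $\Psi_{N,t}$ so as to absorb \emph{both} short-scale contributions at once, or equivalently to perform the scattering correction separately in each spin channel. Commuting $\pmb\sigma_j\bullet\pmb\sigma_k$ past the projectors $p_j$ -- whose ranges are one-dimensional subspaces of the three-dimensional spin space -- and tracking the resulting matrix elements against the components $(u_t,v_t,w_t)$ of $\phi_t$ is the combinatorial heart of the argument. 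Once this bookkeeping is completed, all remaining ingredients -- $H^2$-regularity of $\phi_t$, boundedness and compact support of $W,V$, and kinetic-energy control of excitations -- are structurally identical to those in \cite{MO-pseudospinors-2017}, which is precisely why the authors content themselves with the statement of Theorem \ref{thm:GPthm}.
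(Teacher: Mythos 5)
The paper does not actually prove Theorem \ref{thm:GPthm}: it is stated without proof, with the remark that the analysis of \cite{MO-pseudospinors-2017} (Pickl's counting functional augmented by the energy-per-particle difference, the microscopic scattering correction at scale $N^{-1}$, kinetic-energy control of the excited particles) extends to the new spin terms treated in the proof of Theorem \ref{thm:MFthm}. Your proposal follows exactly that indicated route -- the functional $\beta_N(t)$ with constant energy summand, the modified weight, the Gr\"onwall closure, and the observation that $\pmb\sigma_j\bullet\pmb\sigma_k$ is bounded so the Cauchy--Schwarz steps survive -- so in spirit you are aligned with the authors' intended argument; but as written it is a roadmap rather than a proof, and the one step you yourself single out as the obstacle is left unexecuted.

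That step is also where your specific prescription is shaky. You propose to ``insert the zero-energy scattering solutions of $W$ and of $V$'' as if the two short-scale corrections could be performed independently, but the two-body problem at scale $N^{-1}$ is governed by the single operator-valued potential $W_N+V_N\,\pmb\sigma_1\bullet\pmb\sigma_2$, which on the spin-symmetric ($s$-wave relevant) sector decomposes into the total-spin channels with scalar potentials $W-2V$ ($F_{\mathrm{tot}}=0$) and $W+V$ ($F_{\mathrm{tot}}=2$). Since the scattering length is not additive in the potential, there is no ``zero-energy scattering solution of $V$'' that can be inserted alongside that of $W$; the correction must be built channel-wise from the zero-energy solutions of $W-2V$ and $W+V$ (and positivity of these channel potentials, needed for the techniques of \cite{MO-pseudospinors-2017,Pickl-RMP-2015}, does not follow from assumption (i) alone when the $-2V$ combination appears). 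One must then verify that the resulting channel scattering lengths recombine into the couplings appearing in \eqref{eq:GP_system}, and that the channel projectors commute appropriately past $p_j,q_j$ in the counting estimates. This is precisely the ``combinatorial heart'' you acknowledge; without carrying it out -- or at least exhibiting the corrected trial structure and the remainder bounds it produces -- the proposal does not yet constitute a proof of Theorem \ref{thm:GPthm}, though it is a faithful outline of the strategy the paper points to.
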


As before, Theorem \ref{thm:GPthm} proves the persistence of spinor BEC for a spin-1 Bose gas, once this condition is present at initial time and in the appropriate (ground-state-like) energy manifold. It also provides a quantitative rate of convergence in $N$ for $\gamma_{\Psi_{\!N,t}}^{(1)}$ that it is not particularly informative to make explicit, since it would be a surely non-optimal inverse power $N^{-\eta}$ for some small $\eta>0$ that depends on $\phi_0$, $W$, $V$, $\eta_1$, and $\eta_2$.

As commented already for Theorem \ref{thm:MFthm}, the technique we adopt here allows one to generalise Theorem \ref{thm:GPthm} with more general one-body Schr\"{o}dinger operators replacing the one-body kinetic operator $-\Delta$. For example, the details for dealing with the \emph{magnetic} Laplacian $(\ii\nabla+\mathbf{A})^2$ in the analogous theorem for scalar bosons are worked out in \cite{AO-GP_magnetic_lapl-2016volume}.

Theorem \ref{thm:MFthm} in the pure mean-field case, and Theorem \ref{thm:GPthm} in the ameliorated mean-field-like scaling of Gross-Pitaevskii type have many precursors in the literature for scalar condensates, where the order parameter is a scalar wave function instead of a spinor.

For the vast, and by now classical literature on the uniqueness of the minimiser of the Gross-Pitaevskii functional, the leading order of the ground state energy, and the emergence of condensation in the ground state, in the case of scalar bosons, we refer to the monograph \cite{LSeSY-ober} and the references therein. More recently, based on quantum de Finetti methods \cite{CKMR-QdF-2007,Lewin-Nam-Rougerie-remQdF-2015} the ground state energy asymptotics and the proof of condensation in the ground state of a scalar Bose gas was re-obtained in \cite{Lewin-Nam-Rougerie-2014-HartreeTheory} in the mean-field scaling and in \cite{Nam-Rougerie-Seiringer_GPrevisited-2016} in the Gross-Pitaevskii scaling. The dynamical analysis too has covered different space dimensions ($d=1,2,3$), a wide range of local singularities and long-distance decays for the inter-particle interactions, and various types of scaling limits in the many-body Hamiltonian $H_N$. We refer to the reviews \cite{S-2007,S-2008,Benedikter-Porta-Schlein-2015} for a comprehensive outlook, remarking that over the years the dynamical problem has involved a variety of approaches and techniques from analysis, operator theory, kinetic theory, and probability.

In the present work we employ a particularly robust and versatile method, invented and refined for \emph{scalar} Bose gases by P.~Pickl \cite{Pickl-JSP-2010,Pickl-LMP-2011,Pickl-RMP-2015}, with the contribution of A.~Knowles \cite{kp-2009-cmp2010}, which monitors the smallness of the indicator $\alpha_{(\Psi_{\!N,t},\phi_t)}$ by means of an ad hoc ``counting'' of the amount of particles in the many-body state $\Psi_{\!N,t}$ that `occupy' the one-body state $\phi_t$.

We extended this approach for the first time to pseudo-spinor condensates in our recent work \cite{MO-pseudospinors-2017}. The spinor Hamiltonian \eqref{eq:meanfieldHN} gives rise to terms that are qualitatively similar to those of the pseudo-spinor model with a Rabi coupling to an external magnetic field, plus \emph{additional new terms} that are due to the spin-spin interaction. Our analysis will primarily focus on such terms.

\section{Proof of Theorem \ref{thm:MFthm}} \label{sect:proof}

In this Section we present the proof of Theorem \ref{thm:MFthm}.

The main bound \eqref{eq:alpha_at_t} is going to be established by means of a Gr\"onwall argument, thus controlling the time derivative of $\alpha_{(\Psi_{\!N,t},\phi_t)}$ in terms of $\alpha_{(\Psi_{\!N,t},\phi_t)}$ itself. To do so, it is convenient to re-write $\alpha_{(\Psi_{\!N,t},\phi_t)}$ in a way that allows good algebraic manipulations (see \eqref{eq:identities}). This is the key idea of the so-called `counting' technique, first developed in \cite{kp-2009-cmp2010,Pickl-LMP-2011} and later refined in \cite{Pickl-RMP-2015} to cover the Gross-Pitaevskii case.

Before the actual proof, let us introduce an amount of definitions, notation, and auxiliary results.

\subsection{Preparatory material}

Throughout this Section, by $f\lesssim g$ we mean that $f\leqslant C g$ for some constant $C=C(W,V,\phi_0)>0$ independent of $t$ or $N$. To denote more restrictive dependences we shall write $f \lesssim_{W,V} g$ and the like.

One-body or two-body operators on $\cH_N$ will be denoted by $A_j$ and $B_{ij}$ to indicate that they act non-trivially, respectively as the operator $A$ on $\mathfrak{h}$ and the operator $B$ on $\mathfrak{h}\otimes \mathfrak{h}$,  only on the degrees of freedom of the $j$-th, or of the $i$-th and $j$-th particles.

In particular, the orthogonal projections
\begin{equation}
p_t\;:=\;\left|\!\begin{pmatrix} u_t \\ v_t \\ w_t\end{pmatrix}\!\right\rangle\left\langle\!\begin{pmatrix} u_t \\ v_t \\ w_t\end{pmatrix}\right|\,,\qquad q_t\;:=\;\mathbbm{1}-p_t
%
%
\end{equation}
onto the spinor solution $\phi_t=\begin{pmatrix} u_t \\ v_t \\ w_t \end{pmatrix}$ to the system \eqref{eq:MF_system} and onto its orthogonal complement lift to the operators
\begin{equation}\label{eq:pi_qi}
\begin{split}
(p_t)_j\;=\;&\underbrace{\mathbbm{1}\otimes\dots \otimes\mathbbm{1}}_{j-1}\otimes\, p_t\otimes\underbrace{\mathbbm{1}\otimes\dots\otimes\mathbbm{1}}_{N-j}\\
(q_t)_j\;=\;&\underbrace{\mathbbm{1}\otimes\dots \otimes\mathbbm{1}}_{j-1}\otimes\, q_t\otimes\underbrace{\mathbbm{1}\otimes\dots\otimes\mathbbm{1}}_{N-j}\,,\qquad j\in\{1,\dots,N\}
\end{split}
\end{equation}
on  $\mathcal{H}_{N}$.

We define the orthogonal projections $P_k$, $k\in\mathbb{Z}$, by
\begin{equation} \label{eq:def_P}
\begin{array}{ll}
P_k\;:=\!\displaystyle\sum_{a\in\{0,1\}^N\atop\sum_{i=1}^Na_i=k}\bigotimes_{i=1}^N\;(p_t)_i^{1-a_i}(q_t)^{a_i}_i\quad & \textrm{if }k\in\{0,\dots,N\} \\
P_k\;:=\;\mathbbm{O}\,, & \textrm{otherwise}\,.
\end{array}
\end{equation}
It follows from \eqref{eq:def_P} that
\begin{equation} \label{eq:completeness}
\big[P_j,P_k \big]\;=\;\delta_{jk}P_k,\qquad \sum_{k=0}^NP_k\;=\;\mathbbm{1}.
\end{equation}
The Hilbert subspace of $\cH_N$ which $P_k$ projects onto is naturally interpreted as the space of $N$-body states with exactly $k$ particles `out of the condensate', in the sense of orthogonality with respect to the spinor $\phi_t$.

Next, we define
\begin{equation}\label{eq:def_mhat_nhat}
\begin{array}{ll}
\displaystyle \widehat m\;:=\;\sum_{k=0}^Nm(k)P_k \qquad & \displaystyle m(k)\;:=\;\frac{k}{N} \\
\displaystyle \widehat n\;:=\;\sum_{k=0}^Nm(k)P_k \qquad & \displaystyle n(k)\;:=\;\sqrt{\frac{k}{N}}
\end{array}
\end{equation}
and their `shifted' counterparts
\begin{equation}
\widehat{m_d}\;:=\;\sum_{k=0}^{N}m(k+d)P_k\,,\qquad \widehat{n_d}\;:=\;\sum_{k=0}^{N}n(k+d)P_k\,,\qquad d\in\mathbb{Z}\,.
\end{equation}
$P_k$, $\widehat{m_d}$, $\widehat{n_d}$ clearly depend on time through $\phi_t$, although for a lighter notation we omit such a dependence.

A key observation is that the indicator $\alpha_{(\Psi_{\!N,t},\phi_t)}$ introduced in \eqref{eq:defalpha} can be re-expressed in terms of the operators \eqref{eq:pi_qi}, \eqref{eq:def_P}, and \eqref{eq:def_mhat_nhat} as
\begin{equation} \label{eq:identities}
\alpha_{(\Psi_{\!N,t},\phi_t)}\;=\; \langle \Psi_{\!N,t},(q_t)_1\Psi_{\!N,t}\rangle_{\mathcal{H}_N}\;=\; \langle \Psi_{\!N,t},\widehat m \Psi_{\!N,t}\rangle_{\mathcal{H}_N}\,.
\end{equation}
Indeed, owing to the bosonic symmetry of $\Psi_{\!N,t}$ and the property \eqref{eq:completeness},
\begin{equation*} 
\begin{split}
\langle \Psi_{\!N,t},(q_t)_1&\Psi_{\!N,t}\rangle_{\mathcal{H}_N}\;=\;\frac{1}{N}\sum_{j=1}^N\langle\Psi_{\!N,t},(q_t)_j\Psi_{\!N,t}\rangle_{\mathcal{H}_N}\\
&\; =\;\frac{1}{N}\sum_{j=1}^N\sum_{k=0}^N\langle\Psi_{\!N,t},(q_t)_jP_k\Psi_{\!N,t}\rangle_{\mathcal{H}_N}\;=\;\frac{1}{N}\sum_{k=0}^Nk\,\langle\Psi_{\!N,t},P_k\Psi_{\!N,t}\rangle\,.
\end{split}
\end{equation*}
In fact, \eqref{eq:identities} will allow for very convenient algebraic manipulations.


Another useful property is the following, which we shall use systematically in the proof in order to compute commutators involving $\widehat{m}$, whose proof may be found, e.g., in \cite[Lemma 3.10]{kp-2009-cmp2010}.

\begin{lemma}[Commutation property] \label{lemma:commutation}
	Let $Q_a$ and $Q_b$ be two tensor-product monomials in $p_t$ and $q_t$, and let $B_{12}$ be a two-body operator on $\cH_N$, all three operators acting non-trivially only on the degrees of freedom of the first and second particle. Then
	\begin{equation*}
	\begin{split}
	Q_a \,C_{12}\,\widehat{m}\,Q_b\;&=\;Q_a \,\widehat{m_d}\,C_{12}\,Q_b \\
	\big[\widehat{m_d},(p_t)_j\big]\;&=\;\big[\widehat{m_d},(q_t)_j\big]\;=\;\mathbbm{O} 
	\end{split}
	\end{equation*}
	and 
	\begin{equation*}
	\begin{split}
	Q_a \,C_{12}\,\widehat{n}\,Q_b\;&=\;Q_a \,\widehat{n_d}\,C_{12}\,Q_b \\
	\big[\widehat{n_d},(p_t)_j\big]\;&=\;\big[\widehat{n_d},(q_t)_j\big]\;=\;\mathbbm{O}
	\end{split}
	\end{equation*}
	for any $j\in\{1,\dots,N\}$ and $d\in\mathbb{Z}$, where $d\in\mathbb{Z}$ is the difference between the number of $q_t$'s in $Q_b$ and the number of $q_t$'s in $Q_a$.
\end{lemma}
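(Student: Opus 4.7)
The plan is to exploit the fact that both $\widehat m$ and $\widehat n$ are diagonal in the orthogonal decomposition $\cH_N=\bigoplus_{k=0}^N\mathrm{Ran}(P_k)$. Since the two operators enter the statement in perfectly symmetric fashion and only their being scalar combinations of the projectors $P_k$ is used in the argument, I would carry out the proof explicitly for $\widehat m$ and $\widehat{m_d}$; the case of $\widehat n$ and $\widehat{n_d}$ is verbatim the same.

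For the commutator identities $[\widehat{m_d},(p_t)_j]=[\widehat{m_d},(q_t)_j]=\mathbbm O$, I would unfold $\widehat{m_d}=\sum_{k=0}^N m(k+d)P_k$ and reduce matters to checking that $(p_t)_j$ and $(q_t)_j$ commute with every $P_k$. Writing $P_k=\sum_{|a|=k}\bigotimes_{i=1}^N (p_t)_i^{1-a_i}(q_t)_i^{a_i}$, each summand is a product of local factors: on the $i$-th tensor slot with $i\neq j$ everything acts on disjoint degrees of freedom and commutes trivially, while on the $j$-th slot one invokes the elementary fact that $p_t$ and $q_t=\mathbbm 1-p_t$ commute among themselves. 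Linearity in $k$ then closes the argument.

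For the shift identity $Q_aC_{12}\widehat m\,Q_b=Q_a\widehat{m_d}C_{12}Q_b$ (with $C_{12}$ denoting the two-body operator called $B_{12}$ in the statement), I would insert the resolution of the identity $\mathbbm 1=\sum_{k=0}^NP_k$ on the right and verify the corresponding equality on each $\mathrm{Ran}(P_k)$ separately. Let $k_a$ and $k_b$ denote the numbers of $q_t$-factors in $Q_a$ and $Q_b$, so that $d=k_b-k_a$. Since $Q_b$ pins the $q_t$-count on sites $\{1,2\}$ to $k_b$, the image of $Q_bP_k$ lies in $\mathrm{Ran}(P_k)$ with exactly $k-k_b$ excitations distributed over sites $\{3,\dots,N\}$; in particular $\widehat m\,Q_bP_k=m(k)\,Q_bP_k$. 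The two-body operator $C_{12}$ preserves the $q_t$-count on $\{3,\dots,N\}$, and the subsequent $Q_a$ pins the count on $\{1,2\}$ to $k_a$, so that $Q_aC_{12}Q_bP_k$ takes values in $\mathrm{Ran}(P_{k-d})$. On the right-hand side, $\widehat{m_d}$ inserted between $C_{12}$ and $Q_a$ acts on that same image as the multiplier $m((k-d)+d)=m(k)$. Both sides therefore equal $m(k)\,Q_aC_{12}Q_bP_k$, and summing in $k$ yields the identity.

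The step I expect to be the main conceptual obstacle is not a hard estimate but rather the correct identification and sign of the shift $d=k_b-k_a$: one has to keep scrupulous track of which of the two $q_t$-counts belongs to the left operator and which to the right, and of how $C_{12}$ redistributes excitations between the sites $\{1,2\}$ and the remaining sites $\{3,\dots,N\}$. Once the proof is organised $P_k$-by-$P_k$ as above, this is however a purely combinatorial verification, and no analytic difficulty arises.
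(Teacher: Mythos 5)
Your argument is correct. The paper does not prove this lemma itself (it defers to \cite[Lemma 3.10]{kp-2009-cmp2010}), and your proof --- commuting $(p_t)_j$, $(q_t)_j$ through each summand of $P_k$, then verifying the shift identity on each $\mathrm{Ran}(P_k)$ by tracking the $q_t$-count on sites $\{1,2\}$ versus $\{3,\dots,N\}$ --- is essentially that standard counting argument, with the correct sign $d=k_b-k_a$. The one step worth spelling out is the last: $\widehat{m_d}$ acts \emph{before} $Q_a$, so to read off the multiplier you should either commute $\widehat{m_d}$ past $Q_a$ (legitimate by the commutation identities you established first) or decompose $C_{12}Q_bP_k$ according to the $q_t$-count on sites $\{1,2\}$ and note that $Q_a$ annihilates every component except the one with count $k_a$; either way the value $m((k-d)+d)=m(k)$ comes out and both sides agree on $\mathrm{Ran}(P_k)$.
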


Further relevant properties of $\widehat m$ and $\widehat n$ are collected in the next Lemma (see, e.g., \cite[Section 3]{kp-2009-cmp2010} for the standard proof).

\begin{lemma}[Properties of $\widehat m$ and $\widehat n$] \label{lemma:mn}
	\begin{itemize}\item[]
		\item[(i)] For any $d\in\mathbb{Z}$,
		\begin{equation*}
		\widehat{m}-\widehat{m_{d}}\;=\;-\frac{d}{N}\mathbbm{1}\,.
		\end{equation*}
		\item[(ii)] The operators $\widehat{m}$ and $\widehat{n}$ are invertible on the range of $(q_t)_1$ with inverse that is bounded on $\cH_N$. For such inverses we shall write $\widehat{m}^{-1}(q_t)_1$ and $\widehat{n}^{-1}(q_t)_1$.
		\item[(iii)] For any $\Psi\in \mathcal{H}_{N,\mathrm{sym}}$,
		\begin{equation*}
		\langle\Psi, \widehat m^{-1}(q_t)_1(q_t)_2\Psi\rangle_{\mathcal{H}_N}\;\leqslant \;\frac{N}{N-1}\,\langle\Psi,\widehat m \Psi\rangle_{\mathcal{H}_N}\,.
		\end{equation*}
	\end{itemize}
\end{lemma}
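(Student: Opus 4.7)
The three parts of this lemma are all essentially consequences of the simultaneous spectral decomposition of $\widehat{m}$, $\widehat{n}$, $\widehat{m_d}$, $\widehat{n_d}$ on the joint eigenprojections $\{P_k\}_{k=0}^N$, combined with bosonic symmetry. The overall plan is to exploit (a) $\sum_k P_k=\mathbbm{1}$ and $P_jP_k=\delta_{jk}P_k$, (b) the crucial identity $(q_t)_1P_0=\mathbbm{O}$ (since $P_0=(p_t)^{\otimes N}$ annihilates any factor with $q_t$), and (c) the fact that $\sum_{j=1}^N(q_t)_j$ coincides with $N\widehat{m}$ because the $k$-th summand in its spectral form is $k\,P_k$.

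For part (i) I would compute $\widehat m-\widehat{m_d}=\sum_{k=0}^N(m(k)-m(k+d))P_k=-\frac{d}{N}\sum_kP_k=-\frac{d}{N}\mathbbm{1}$, using $m(k)-m(k+d)=-d/N$ and completeness. Part (ii) follows from (a) and (b): on $\operatorname{Ran}(q_t)_1$ only the terms with $k\geq 1$ contribute to $\widehat m$ and $\widehat n$, where $m(k)\geq 1/N$ and $n(k)\geq 1/\sqrt N$; thus I define $\widehat m^{-1}(q_t)_1:=\sum_{k=1}^N\frac{N}{k}P_k(q_t)_1$ and $\widehat n^{-1}(q_t)_1:=\sum_{k=1}^N\sqrt{N/k}\,P_k(q_t)_1$, which are bounded on $\cH_N$ by $N$ and $\sqrt N$ respectively, and verify they act as two-sided inverses on $\operatorname{Ran}(q_t)_1$.

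For part (iii), the plan is to use bosonic symmetry of $\Psi$ to pass from the pair $(1,2)$ to an average over ordered pairs, then express the resulting sum in closed form on the $\{P_k\}$-decomposition. Concretely, I would write
\begin{equation*}
\langle\Psi,\widehat m^{-1}(q_t)_1(q_t)_2\Psi\rangle_{\cH_N}=\frac{1}{N(N-1)}\Big\langle\Psi,\widehat m^{-1}\!\!\sum_{i\neq j}(q_t)_i(q_t)_j\,\Psi\Big\rangle_{\cH_N},
\end{equation*}
then substitute $\sum_{i\neq j}(q_t)_i(q_t)_j=\bigl(\sum_i(q_t)_i\bigr)^2-\sum_i(q_t)_i=N^2\widehat m^2-N\widehat m$ (using $(q_t)_i^2=(q_t)_i$ and the identification of $\sum_i(q_t)_i$ with $N\widehat m$ term by term on $P_k$). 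Applying $\widehat m^{-1}$ from the left on $\operatorname{Ran}(\mathbbm{1}-P_0)$ — which contains the range of $\sum_{i\neq j}(q_t)_i(q_t)_j$ — gives $\widehat m^{-1}\widehat m^2=\widehat m$ and $\widehat m^{-1}\widehat m=\mathbbm{1}-P_0$, so the right-hand side becomes $\frac{N}{N-1}\langle\Psi,\widehat m\Psi\rangle-\frac{1}{N-1}\langle\Psi,(\mathbbm{1}-P_0)\Psi\rangle$, from which the claimed bound follows by dropping the non-negative subtracted term.

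The only delicate point — and thus the main thing to be careful about — is the handling of $\widehat m^{-1}$ as a pseudo-inverse: one must keep track of the fact that all manipulations occur on $\operatorname{Ran}(\mathbbm{1}-P_0)$ where $\widehat m$ is strictly positive, and invoke $(q_t)_1P_0=\mathbbm{O}$ (together with the commutativity of $(q_t)_j$ with each $P_k$, guaranteed by Lemma \ref{lemma:commutation}) to justify that $(q_t)_1(q_t)_2$ lands inside this range. Once this is in place, everything reduces to bounded functional calculus on a commuting family of orthogonal projections and the algebraic identity above.
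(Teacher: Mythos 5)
Your proposal is correct, and it is essentially the standard argument: the paper itself omits the proof and refers to \cite[Section 3]{kp-2009-cmp2010}, where the same reasoning appears — spectral calculus on the commuting family $\{P_k\}$ for (i) and (ii), and for (iii) the symmetrisation $\langle\Psi,\widehat m^{-1}(q_t)_1(q_t)_2\Psi\rangle=\frac{1}{N(N-1)}\langle\Psi,\widehat m^{-1}\sum_{i\neq j}(q_t)_i(q_t)_j\Psi\rangle$ combined with $\sum_i(q_t)_i=N\widehat m$ and the pseudo-inverse identities on $\mathrm{Ran}(\mathbbm{1}-P_0)$. Your handling of the delicate point (that $(q_t)_i(q_t)_j$ maps into $\mathrm{Ran}(\mathbbm{1}-P_0)$, where $\widehat m$ is strictly positive) is exactly what makes the standard proof go through, so nothing is missing.
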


%
%
%

The catch from Lemma \ref{lemma:mn} is: $\widehat{m}$ and its `shifted' version differ only by a sub-leading term, and moreover, in the expectation on many-body bosonic states, one can effectively replace two $q$'s (acting on different particles) with two $\widehat{m}$'s. For short we shall also write $\widehat{m}^{-1}$ and $\widehat{n}^{-1}$ instead of $\widehat{m}^{-1}(q_t)_1$ and $\widehat{n}^{-1}(q_t)_1$, understanding that such operators act on the range of $(q_t)_1$.

Last, we need a control on the norm of five relevant operators on $\mathfrak{h}$, which we denote by $W^{\phi_t}$, $V^{\phi_t}$, $D_{W,\phi_t}$, $E_{V,\phi_t}$, and $F_{W,V,\phi_t}$, and may be thought of as operators of (scalar or spinor) multiplication by the following `mean-field' (or `smeared') potentials:
\begin{equation} \label{eq:smeared potentials}
\begin{split}
W^{\phi_t}(x)\;&:=\;W*(|u_t|^2+|v_t|^2+|w_t|^2)\\
V^{\phi_t}(x)\;&:=\;(V*\langle\phi_t,\pmb{\sigma}\phi_t\rangle_{\mathbb{C}^3})\bullet\pmb\sigma\\
D_{W,\phi_t}(x)\;&:=\;W^2*(|u_t|^2+|v_t|^2+|w_t|^2) \\
E_{V,\phi_t}(x)\;&:=\;\langle\phi_t(\cdot),V^2(x-\cdot)(\pmb\sigma_\cdot\bullet\pmb\sigma)^2\phi_t(\cdot)\rangle_\mathfrak{h}\\
F_{W,V,\phi_t}(x)\;&:=\;\langle\phi_t(\cdot),\big(W(x-\cdot)+V(x-\cdot)\pmb\sigma_\cdot\bullet\pmb\sigma\big)^2\phi_t(\cdot)\rangle_\mathfrak{h}\,.
\end{split}
\end{equation}
Thus, $W^{\phi_t}$ and $D_{W,\phi_t}$ multiply, respectively, by the scalar functions $W^{\phi_t}(x)$ and $D_{W,\phi_t}(x)$, which can be re-written, in terms of the rule for the scalar product in $\mathfrak{h}$, also as $W^{\phi_t}(x)=\langle\phi_t(\cdot),W(x-\cdot)\phi_t(\cdot)\rangle_\mathfrak{h}$ and $D_{W,\phi_t}(x)=\langle\phi_t(\cdot),W^2(x-\cdot)\phi_t(\cdot)\rangle_\mathfrak{h}$. In the treatment of \emph{pseudo-spinor} condensates only these two types of operators are present: the other three are specific for the present treatment of spinor condensates. $V^{\phi_t}$ acts by a matrix multiplication, according to the rule for the $\bullet$-product, namely $(V^{\phi_t}\psi)(x)=\sum_{j=1}^3(V*\langle\phi_t,\sigma^{(j)}\phi_t\rangle_{\mathbb{C}^3})(x)(\sigma^{(j)}\psi)(x)$. Analogously, $E_{V,\phi_t}$ and $F_{W,V,\phi_t}$ too act non-trivially on both the space and the spin degrees of freedom of $\mathfrak{h}$: in particular, they act on the spin degrees of freedom through the $\pmb\sigma$ that stays in the second factor of the $\bullet$-product.

We shall use systematically the property, that we now prove, that the above operators are bounded on $\mathfrak{h}$ \emph{uniformly in time}.


\begin{lemma}[Uniform-in-time boundedness of the smeared potentials] \label{lemma:smeared}
	One has
	\begin{equation}\label{eq:unifbdd}
	\big\|W^{{\phi_t}}\big\|_{\mathrm{op}},\;	\big\|V^{\phi_t}\big\|_{\mathrm{op}},\;\big\|D_{W,\phi_t}\big\|_{\mathrm{op}},\;\big\|E_{V,\phi_t}\big\|_{\mathrm{op}},\;\big\|F_{W,V,\phi_t}\big\|_{\mathrm{op}}\;\lesssim\; 1.
	\end{equation}
\end{lemma}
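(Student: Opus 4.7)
Each of the five operators in \eqref{eq:smeared potentials} acts on $\mathfrak{h}=L^2(\mathbb{R}^3,\mathbb{C}^3)$ as multiplication by a scalar or $\mathbb{C}^{3\times 3}$-valued function of $x$: $W^{\phi_t}$ and $D_{W,\phi_t}$ are scalar multiplications, whereas $V^{\phi_t}$, $E_{V,\phi_t}$, and $F_{W,V,\phi_t}$ involve Pauli matrices acting on the spin copy of $\mathbb{C}^3$. Since the $\sigma^{(k)}$ are uniformly bounded on $\mathbb{C}^3$, the operator norms of these multiplications will be controlled, up to a universal factor, by the $L^\infty_x$ norms of the five scalar convolutions
\[
|W|*|\phi_t|^2,\quad|V|*|\phi_t|^2,\quad W^2*|\phi_t|^2,\quad V^2*|\phi_t|^2,\quad |W||V|*|\phi_t|^2,
\]
the last one entering $F_{W,V,\phi_t}$ after expanding $(W+V\,\pmb{\sigma}_\cdot\bullet\pmb{\sigma})^2$.

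Each of these suprema will be handled by a single Cauchy-Schwarz inequality combined with the form bound $W^2,V^2\lesssim\mathbbm{1}-\Delta$ of assumption (i) of Theorem \ref{thm:MFthm}. Concretely, for any $f\in H^1(\mathbb{R}^3)$ and any $x\in\mathbb{R}^3$,
\[
\bigl|(W*|f|^2)(x)\bigr|\;\leqslant\;\Bigl(\int W(x-y)^2|f(y)|^2\,\mathrm{d}y\Bigr)^{\!1/2}\|f\|_{L^2}\;\lesssim\;\|f\|_{H^1}\|f\|_{L^2},
\]
the last step being the form bound on $W^2$ applied to the translate $f(\cdot+x)$ (translation is an isometry of $H^1$). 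The same argument with $W$ replaced by $V$ handles $V^{\phi_t}$; the analogous Cauchy-Schwarz with $W^2$ in place of $W$ gives $\|W^2*|f|^2\|_\infty\lesssim\|f\|_{H^1}^2$, settling $D_{W,\phi_t}$ and $E_{V,\phi_t}$; and the cross term in $F_{W,V,\phi_t}$ is controlled via $|W||V|\leqslant\tfrac{1}{2}(W^2+V^2)$. This already yields \eqref{eq:unifbdd} pointwise in $t$, with a constant depending on $\|\phi_t\|_{H^1}$.

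It remains to show that $\|\phi_t\|_{H^1}$ is bounded uniformly in $t$. Mass conservation for \eqref{eq:MF_system} yields $\|\phi_t\|_{L^2}=\|\phi_0\|_{L^2}=1$. For the kinetic norm, I would insert the previous estimate into the potential part of the conserved Hartree functional \eqref{eq:Hfunctional}, obtaining
\[
\bigl|\mathcal{E}^{\mathrm{H}}[\phi_t]-\|\nabla\phi_t\|_{L^2}^2\bigr|\;\lesssim\;\|\phi_t\|_{H^1}\|\phi_t\|_{L^2}^3\;=\;\sqrt{1+\|\nabla\phi_t\|_{L^2}^2},
\]
which provides the algebraic self-bound $\|\nabla\phi_t\|_{L^2}^2\leqslant\mathcal{E}^{\mathrm{H}}[\phi_0]+C\sqrt{1+\|\nabla\phi_t\|_{L^2}^2}$ and hence $\|\phi_t\|_{H^1}\leqslant C(\phi_0,W,V)$ uniformly in $t$, closing \eqref{eq:unifbdd}.

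The main obstacle is precisely this last coercivity step: the two potential terms in \eqref{eq:Hfunctional} are not a priori signed, so one cannot simply drop them to bound $\|\nabla\phi_t\|_{L^2}^2$ directly by $\mathcal{E}^{\mathrm{H}}[\phi_0]$. The Cauchy-Schwarz estimate shows instead that they grow at most as $\|\phi_t\|_{H^1}^1$, strictly sublinearly in the kinetic term $\|\nabla\phi_t\|_{L^2}^2$, which is what makes the algebraic inequality close. Apart from this, the proof reduces to the bookkeeping of $\pmb{\sigma}$-factors in $V^{\phi_t}$, $E_{V,\phi_t}$, and $F_{W,V,\phi_t}$ together with repeated applications of the same Cauchy-Schwarz inequality.
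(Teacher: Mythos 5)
Your proposal is correct and follows essentially the same route as the paper: first reduce all five operator norms to bounds in terms of $\|\phi_t\|_{H^1}$ via the form bounds of assumption (i), then get the uniform-in-time $H^1$ control from conservation of the Hartree functional, closing by an absorption argument (your sublinear self-bound $\|\nabla\phi_t\|_{L^2}^2\leqslant \mathcal{E}^{\mathrm{H}}[\phi_0]+C\sqrt{1+\|\nabla\phi_t\|_{L^2}^2}$ plays the role of the paper's $\varepsilon$-Young splitting against $\|F_{W,V,\phi_t}\|_{\mathrm{op}}$). One cosmetic caveat: for $W^2*|f|^2$ no Cauchy--Schwarz is needed (taken literally, ``Cauchy--Schwarz with $W^2$ in place of $W$'' would require a bound on $W^4$); the form bound $W^2\lesssim\mathbbm{1}-\Delta$ applied to the translate of $f$ gives $\|W^2*|f|^2\|_{L^\infty}\lesssim\|f\|_{H^1}^2$ directly, exactly as in the last step of your first estimate.
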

\begin{proof}
	We shall actually show that the above operator norms are all controlled by the square of the $H^1$-norm of $\phi_t$.
	After that, one observes that 
	the Hartree energy functional \eqref{eq:Hfunctional}, rewritten as
	\begin{equation*} \label{eq:Hartree_functional}
	\mathcal{E}^{\mathrm{H}}[\phi_t]\;=\;\big\|\nabla\phi_t\big\|_{\mathfrak{h}}^2+\frac{1}{2}\langle\phi_t,\big( W^{\phi_t}+V^{\phi_t}\big)\phi_t\rangle_\mathfrak{h},
	\end{equation*}
	is conserved along the Hartree system \eqref{eq:MF_system}, and this implies, for every $\varepsilon>0$,
	\begin{equation*}
	\big\|\nabla\phi_t\big\|_\mathfrak{h}^2\;\leqslant\;\mathcal{E}^{\mathrm{H}}[\phi_0]+\varepsilon^{-1}+\varepsilon\big\|F_{W,V,\phi_t}\big\|_{\mathrm{op}}.
	\end{equation*}
	This, by the bound
	\begin{equation*}
	\big\|F_{W,V,\phi_t}\big\|_{\mathrm{op}}\;\lesssim_{W,V}\;\|\phi_t\|_{H^1(\mathbb{R}^3)\otimes\mathbb{C}}^2,
	\end{equation*}
	implies
	\begin{equation*}
	\|\phi_t\|_{H^1(\mathbb{R}^3)\otimes \mathbb{C}^3}\;\lesssim_{W,V}\;\|\phi_0\|_{H^1(\mathbb{R}^3)\otimes \mathbb{C}^3}
	\end{equation*}
	and hence the boundedness \eqref{eq:unifbdd} uniformly in time.

	Owing to the assumption (i) in Theorem \ref{thm:MFthm}\,,
	\begin{equation*} 
	W\;\leqslant\;{\textstyle\frac{1}{2}}(\mathbbm{1}+W^2)\;\lesssim_W\;\mathbbm{1}-\Delta\,,
	\end{equation*}
	whence $\|W^b*|f|^2\|_{L^\infty(\mathbb{R}^3)}\lesssim_W\|f\|_{H^1(\mathbb{R}^3)}^2$ for $b\in\{1,2\}$ and $f\in H^1(\mathbb{R}^3)$, and hence
	\begin{equation*}
	\big\|W^{\phi_t}\big\|_{\mathrm{op}}\;\lesssim_W\; \|\phi_t\|^2_{H^1(\mathbb{R}^3)\otimes \mathbb{C}^3} \quad\textrm{ and }\quad \big\|D_{W,\phi_t}\big\|_{\mathrm{op}}\;\lesssim_W\; \|\phi_t\|^2_{H^1(\mathbb{R}^3)\otimes \mathbb{C}^3} \,.
	\end{equation*}
	
	Concerning $V^{\phi_t}$, one has
	\[
	\big\|V^{\phi_t}\big\|_{\mathrm{op}} \;\leqslant\; \sum_{j=1}^{3} \big\| V*\langle\phi_t,\sigma^{(j)}\phi_t\rangle_{\mathbb{C}^3}\big\|_{L^\infty(\mathbb{R}^3)}\,.
	\]
	The $L^\infty$-norms in the above expression are of the form $\|V*(fg)\|_{L^\infty(\mathbb{R}^3)}$, where $f,g$ can be $u_t,v_t,w_t$, and writing $|fg|\leqslant\frac{1}{2}(|f|^2+|g|^2)$ one boils down the estimate to $\|V*|f|^2\|_{L^\infty(\mathbb{R}^3)}$: due again to the assumption (i) in Theorem \ref{thm:MFthm}, $V\lesssim_V(\mathbbm{1}-\Delta)$ and $\|V*|f|^2\|_{L^\infty(\mathbb{R}^3)}\lesssim_V\|f\|_{H^1(\mathbb{R}^3)}^2$, whence
	\[
	\big\|V^{\phi_t}\big\|_{\mathrm{op}}\;\lesssim_V\; \|\phi_t\|^2_{H^1(\mathbb{R}^3)\otimes \mathbb{C}^3}\,.
	\]

	The scheme of the estimate of the norm of $E_{V,\phi_t}$ and $F_{W,V,\phi_t}$ is completely analogous, for the orbitals $u_t,v_t,w_t$ are merely linearly re-shuffled by the action of the spinor part of the operator, namely by $\pmb{\sigma}$.
\end{proof}

\subsection{Gr\"onwall estimate for $\alpha$}

One can easily see that the function $t\mapsto \alpha_{(\Psi_{\!N,t},\phi_t)}$ defined in \eqref{eq:defalpha} is actually differentiable. Indeed, $\alpha_{(\Psi_{\!N,t},\phi_t)}= \langle \Psi_{\!N,t},(q_t)_1\Psi_{\!N,t}\rangle_{\mathcal{H}_N}$ (see \eqref{eq:identities} above), therefore when the time derivative hits the $\Psi_{\!N,t}$'s this produces the commutator $[H_N,(p_t)_1]$, owing to the many-body equation $\ii\partial_t \Psi_{\!N,t}=H_N\Psi_{\!N,t}$, and this term is well-defined because $(p_t)_1\Psi_{\!N,t}\in(H^1(\mathbb{R}^3)\otimes\mathbb{C}^3)^{\otimes N}$, which is the form domain of $H_N$. When instead the time derivative hits $(q_t)_1$, this produces the commutator $[-\Delta+W^{\phi_t}+V^{\phi_t},p_t]$, owing to the one-body non-linear equation \eqref{eq:MF_system_compact} and using the definition \eqref{eq:smeared potentials}: then, as a consequence of  Lemma \ref{lemma:smeared}, the operator $-\Delta+W^{\phi_t}+V^{\phi_t}$ maps continuously $H^1(\mathbb{R}^3)\otimes\mathbb{C}^3$ (its form domain) into its dual, which, together with $(p_t)_1\Psi_{\!N,t}\in(H^1(\mathbb{R}^3)\otimes\mathbb{C}^3)^{\otimes N}$, makes the expectation $\langle \Psi_{\!N,t},[(-\Delta+W^{\phi_t}+V^{\phi_t})_1,(p_t)_1]\Psi_{\!N,t}\rangle_{\cH_N}$ well defined too.

We can then differentiate time and exploit the bosonic symmetry of $\Psi_{\!N,t}$, repeating the manipulations that led to \eqref{eq:identities}: we obtain
\begin{equation} \label{eq:time_derivative_alpha}
\begin{split}
\dot\alpha_{(\Psi_{\!N,t},\phi_t)}\;=\;\frac{\ii}{2}\big\langle \Psi_{\!N,t},\big[&  (N-1)W_{12} +(N-1)V_{12}\,\pmb{\sigma}_1\bullet\pmb{\sigma}_2 \\ 
&- NW^{\phi_t}_1-NW^{\phi_t}_2-N V_1^{\phi_t}-NV^{\phi_t}_2, \widehat{m}\big]\,\Psi_{\!N,t}\big\rangle_{\mathcal{H}_N}\,.
\end{split}
\end{equation}
It is worth stressing that an important cancellation occurred in \eqref{eq:time_derivative_alpha}, the r.h.s.~of which does not depend on the kinetic operator $-\Delta$ any longer.

Now, inserting
\begin{equation*}
\mathbbm{1}\;=\;\big((p_t)_1+(q_t)_1\big)\big((p_t)_2+(q_t)_2\big)
\end{equation*}
in both sides of the commutator in \eqref{eq:time_derivative_alpha} and expanding the products produces 16 terms: those with the \emph{same number} of $q_t$'s to the left and to the right vanish, due to Lemma \ref{lemma:commutation}, and the remaining ones are all of the form
\begin{equation} \label{eq:three_terms}
\begin{split}
(\mathrm{I})\;:=\;\frac{\ii}{2}\big\langle\Psi_{\!N,t},& (p_t)_1(p_t)_2 \big[(N-1)W_{12}+(N-1)V_{12}\,\pmb\sigma_1\bullet\pmb\sigma_2\\
&\qquad\quad\quad\quad-NW_1^{\phi_t}-NV_1^{\phi_t},\widehat{m}\big](q_t)_1(p_t)_2\Psi_{\!N,t}\big\rangle_{\mathcal{H}_N}\\
(\mathrm{II})\;:=\;\frac{\ii}{2}\big\langle\Psi_{\!N,t},& (p_t)_1(q_t)_2 \big[(N-1)W_{12}+(N-1)V_{12}\,\pmb\sigma_1\bullet\pmb\sigma_2\\
&\qquad\quad\quad\quad-NW_1^{\phi_t}-NV_1^{\phi_t},\widehat{m}\big](q_t)_1(q_t)_2\Psi_{\!N,t}\big\rangle_{\mathcal{H}_N}\\
(\mathrm{III})\;:=\;\frac{\ii}{2}\big\langle\Psi_{\!N,t},& (p_t)_1(p_t)_2 \big[(N-1)W_{12}\\
&\qquad\quad\quad\quad+(N-1)V_{12}\,\pmb\sigma_1\bullet\pmb\sigma_2,\widehat{m}\big](q_t)_1(q_t)_2\Psi_{\!N,t}\big\rangle_{\mathcal{H}_N}\,,
\end{split}
\end{equation}
and one has
\begin{equation} \label{eq:expansion_alpha}
\dot\alpha_{(\Psi_{\!N,t},\phi_t)}\;=\; 2(\mathrm{I})+2(\mathrm{II})+(\mathrm{III})+\mathrm{complex conjugate}\,.
\end{equation}

Let us now control separately the terms of each type in \eqref{eq:three_terms}. For the type $(\mathrm{I})$ we use the identities
\begin{equation} \label{eq:cancellation}
(p_t)_2W_{12}(p_t)_2\;=\;(p_t)_2 W_1^{\phi_t},\qquad \quad(p_t)_2V_{12}\,\pmb\sigma_1\bullet \pmb\sigma_2(p_t)_2\;=\;(p_t)_2 V_1^{\phi_t}
\end{equation}
that follow directly from the definition \eqref{eq:identities}, which allow us to write
\begin{equation*}
(\mathrm{I})\;=\;\frac{-\ii}{2}\big\langle\Psi_{\!N,t},(p_t)_1(p_t)_2\big[ W^{\phi_t}_1+V^{\phi_t}_1,\widehat{m}\big](q_t)_1(p_t)_2\Psi_{\!N,t}\big\rangle_{\mathcal{H}_N}\,.
\end{equation*}
We now use Lemma \ref{lemma:commutation} to compute the commutator, and consequently Lemma \ref{lemma:mn}(i) with $d=-1$, obtaining
\begin{equation*}
(\mathrm{I})\;=\;\frac{-\ii}{2N}\big\langle\Psi_{\!N,t},(p_t)_1(p_t)_2\big( W^{\phi_t}_1+V^{\phi_t}_1\big)(q_t)_1(p_t)_2\Psi_{\!N,t}\big\rangle_{\mathcal{H}_N}\,.
\end{equation*}
The smeared potentials inside brackets, as well the $p_t$'s and $q_t$'s, can be extracted in operator norm: Lemma \ref{lemma:smeared} then yields
\begin{equation} \label{eq:finalI}
|(\mathrm{I})|\;\lesssim\;\frac{1}{N}\,.
\end{equation}
Remarkably, the identities \eqref{eq:cancellation} allowed for the crucial removal of a $O(N)$ factor from $(\mathrm{I})$: this is a signature of the fact that the effective theory that we are considering is a good approximation of the complete theory.

Concerning the term of type $(\mathrm{II})$, using Lemma \ref{lemma:commutation} and then Lemma \ref{lemma:mn} with $d=-1$, we are able to expand the commutator and obtain 
\begin{equation*}
\begin{split}
(\mathrm{II})\;=\;\frac{\ii}{2}\big\langle \Psi_{\!N,t},(q_t)_1(p_t)_2&\times\big({\textstyle\frac{N-1}{N}}\,W_{12}+{\textstyle\frac{N-1}{N}}\,V_{12}\,\pmb{\sigma}_1\bullet\pmb\sigma_2-W_1^{\phi_t}-V_1^{\phi_t} \big)\\
&\times(q_t)_1(q_t)_2\Psi_{\!N,t}\big\rangle_{\mathcal{H}_N}\,.
\end{split}
\end{equation*}
Let us control the terms relative to the summands $\frac{N-1}{N}V_{12}\,\pmb{\sigma}_1\bullet\pmb\sigma_2$ and $V_1^{\phi_t}$, the corresponding terms with $W$ instead of $V$ being treated analogously. By means of Lemma \ref{lemma:smeared} and of \eqref{eq:identities} we get at once
\begin{equation*}
\big|\big\langle \Psi_{\!N,t},(q_t)_1(p_t)_2 V_1^{\phi_t}(q_t)_1(q_t)_2\Psi_{\!N,t}\big\rangle_{\mathcal{H}_N}\big| \;\lesssim\; \|(q_t)_1\Psi_{\!N,t}\|_{\mathcal{H}_N}^2\;=\;\alpha_{(\Psi_{\!N,t},\phi_t)}\,.
\end{equation*}
Moreover,
\begin{equation*}
\begin{split}
\big|\big\langle \Psi_{\!N,t},(q_t)_1&(p_t)_2 V_{12}\,\pmb\sigma_1\bullet\pmb\sigma_2(q_t)_1(q_t)_2\Psi_{\!N,t}\big\rangle_{\mathcal{H}_N}\big|\\
&\leqslant\; \sqrt{ \big\langle \Psi_{\!N,t},(q_t)_1(p_t)_2 V_{12}^2\big(\pmb\sigma_1\bullet\pmb\sigma_2\big)^2(p_t)_2(q_t)_1\Psi_{\!N,t}\big\rangle_{\mathcal{H}_N}}\;\times\\
&\qquad\quad \times\sqrt{ \big\langle \Psi_{\!N,t},(q_t)_1(q_t)_2\Psi_{\!N,t}\big\rangle_{\mathcal{H}_N} }\\
&\lesssim\;\big\|E_{V,\phi_t}\big\|_{\mathrm{op}}\|(q_t)_1\Psi_{\!N,t}\|_{\mathcal{H}_N}^2\;\lesssim\; \|(q_t)_1\Psi_{\!N,t}\|_{\mathcal{H}_N}^2\;=\;\alpha_{(\Psi_{\!N,t},\phi_t)}\,,
\end{split}
\end{equation*}
having applied the Cauchy-Schwarz inequality in the first step, the identity
\begin{equation}
(p_t)_2V^2_{12}(\pmb\sigma_1\bullet\pmb\sigma_2)^2(p_t)_2\;=\;(p_t)_2(E_{V,\phi_t})_1
\end{equation}
that follows from the definition \eqref{eq:smeared potentials} in the second step,  Lemma \ref{lemma:smeared} in the third step, and \eqref{eq:identities} in the last identity. Summarising, 
\begin{equation} \label{eq:finalII}
|(\mathrm{II})|\;\lesssim \;\alpha_{(\Psi_{\!N,t},\phi_t)}\,.
\end{equation}

Concerning the term of type $(\mathrm{III})$, using Lemma \ref{lemma:commutation} and then Lemma \ref{lemma:mn} with $d=-2$, we expand the commutator and obtain
\begin{equation*}
(\mathrm{III})\;=\;\ii\,{\textstyle\frac{N-1}{N}}\big\langle\Psi_{\!N,t},(p_t)_1(p_t)_2 \big(W_{12}+V_{12}\,\pmb\sigma_1\bullet\pmb\sigma_2\big)(q_t)_1(q_t)_2\Psi_{\!N,t}\big\rangle_{\mathcal{H}_N}\,.
\end{equation*}
Owing to Lemma \ref{lemma:mn}(ii) we can insert $\mathbbm{1}=\widehat{n}\widehat{n}^{-1}$ and get
\begin{equation*}
\begin{split}
(\mathrm{III})\;&=\;\ii\,{\textstyle\frac{N-1}{N}}\big\langle\Psi_{\!N,t},(p_t)_1(p_t)_2 \big(W_{12}+V_{12}\,\pmb\sigma_1\bullet\pmb\sigma_2\big)\widehat{n}\widehat{n}^{-1}(q_t)_1(q_t)_2\Psi_{\!N,t}\big\rangle_{\mathcal{H}_N}\\
&=\;\ii\,{\textstyle\frac{N-1}{N}}\big\langle\Psi_{\!N,t},(p_t)_1(p_t)_2\,\widehat{n_2} \big(W_{12}+V_{12}\,\pmb\sigma_1\bullet\pmb\sigma_2\big)\widehat{n}^{-1}(q_t)_1(q_t)_2\Psi_{\!N,t}\big\rangle_{\mathcal{H}_N},
\end{split}
\end{equation*}
having used Lemma \ref{lemma:commutation} to migrate $\widehat{n}$ to the left of the potentials. A Cauchy-Schwarz inequality yields
\begin{equation*}
\begin{split}
|(\mathrm{III})|\;&\leqslant\;\sqrt{\big\langle\Psi_{\!N,t},(p_t)_1(p_t)_2\,\widehat{n_2} \big(W_{12}+V_{12}\,\pmb\sigma_1\bullet\pmb\sigma_2\big)^2\widehat{n_2}(p_t)_1(p_t)_2\Psi_{\!N,t}\big\rangle_{\mathcal{H}_N}}\;\times \\
&\qquad \times\sqrt{\big\langle\Psi_{\!N,t},\widehat{n}^{-2}(q_t)_1(q_t)_2\Psi_{\!N,t}\big\rangle_{\mathcal{H}_N}}\,.
\end{split}
\end{equation*}
The term on the second line above is estimated by $\alpha_{(\Psi_{\!N,t},\phi_t)}^{1/2}$ by first recognising that $\widehat{n}^{-2}=\widehat{m}^{-1}$ and then using Lemma \ref{lemma:mn}(iii). For the term on the first line we observe that
\begin{equation}\label{eq:Fproperty}
(p_t)_2\big(W_{12}+V_{12}\,\pmb\sigma_1\bullet\pmb\sigma_2\big)^2(p_t)_2\;=\; (p_t)_2(F_{W,V,\phi_t})_1\,,
\end{equation}
which follows from the definition \eqref{eq:smeared potentials}. Lemma \ref{lemma:smeared} and \eqref{eq:Fproperty} then imply
\begin{equation*}
|(\mathrm{III})|\;\lesssim\;\sqrt{\|F_{W,V,\phi_t}\|_{\mathrm{op}}}\; \|\widehat{n_2}\Psi_{\!N,t}\|_{\mathcal{H}_N}\;\alpha_{(\Psi_{\!N,t},\phi_t)}^{1/2}\;\lesssim\;\|\widehat{n_2}\Psi_{\!N,t}\|_{\mathcal{H}_N}\;\alpha_{(\Psi_{\!N,t},\phi_t)}^{1/2}\,.
\end{equation*}
Finally, using $\widehat{n_2}^{\,2}=\widehat{m_2}=\widehat{m}+\frac{2}{N}\mathbbm{1}$, we arrive at
\begin{equation} \label{eq:finalIII}
|(\mathrm{III})|\;\lesssim\;\alpha_{(\Psi_{\!N,t},\phi_t)}+\frac{1}{N}\,.
\end{equation}

Using \eqref{eq:finalI}, \eqref{eq:finalII}, and \eqref{eq:finalIII} in \eqref{eq:expansion_alpha}, namely bounding the \emph{real} numbers (I)+complex conjugate, etc., we obtain
\begin{equation}\label{eq:alphadot_estimate}
\dot \alpha_{(\Psi_{\!N,t},\phi_t)}\;\leqslant\; C\Big( \alpha_{(\Psi_{\!N,t},\phi_t)}+\frac{1}{N}\Big)
\end{equation}
for a constant $C$ depending on $W,V,\phi_0$ but not on $N$ or $t$. Then, using a standard Gr\"onwall argument,
\begin{equation}
\alpha_{(\Psi_{\!N,t},\phi_t)}\;\leqslant\; e^{Ct}\Big(\alpha_{(\Psi_{\!N},\phi_0)}+\frac{1}{N}\Big)\;\leqslant\; \frac{K+1}{N}e^{Ct},
\end{equation}
having used assumption (iii) of Theorem \ref{thm:MFthm} in the second inequality in order to estimate the initial datum. This concludes the proof.

\section{Quantitative estimate of the fidelity of the model} \label{sect:phys}

In this final Section we present a quantitative analysis, based on recent experimental data, of the fidelity of the control \eqref{eq:alpha_at_t} of Theorem \ref{thm:MFthm} for the considered many-body mean-field model.

Our main conclusion here is going to be that, despite the character of `first approximation only' of the mean-field model, Theorem \ref{thm:MFthm} provides a control of the time-dependent indicator of condensation $\alpha_{(\Psi_{\!N_\mathrm{exp},t}^{\mathrm{phys}},\phi_t^{\mathrm{phys}})}$ that, \emph{for all the typical duration of an experiment on the dynamics of spinor condensates}, remains very small, of the order of the percent (or smaller). Thus, even the `rough' mean-field treatment given by Theorem \ref{thm:MFthm} provides a justification from first principles of the persistence of condensation throughout the observable dynamics of a spinor condensate.


%



\subsection{Implementing experimental data}

In order to obtain quantitative estimates, let us revisit the setting and the proof presented in the previous Sections.

First we restore the physical constants $\hbar$ (Planck's constant) and $m$ (the mass of each boson atom), and we take $N=N_\mathrm{exp}$, the actual number of particles in a typical experiment. The many-body Hamiltonian \eqref{eq:meanfieldHN} takes the `physical' form
\begin{equation}\label{eq:HNphys}
\begin{split}
H_{N_{\mathrm{exp}}}^{\mathrm{phys}}\;&:=\;\sum_{j=1}^{{N_{\mathrm{exp}}}}(-\frac{\:\hbar^2}{2m}\Delta_{x_j})+\frac{1}{N_{\mathrm{exp}}}\sum_{1<j\leqslant k<N_{\mathrm{exp}}}W(x_j-x_k) \\
&\qquad\qquad+\frac{1}{N_{\mathrm{exp}}}\sum_{1<j\leqslant k<N_{\mathrm{exp}}}V(x_j-x_k)\,\pmb{\sigma}_j\bullet\,\pmb{\sigma}_k\,,
\end{split}
\end{equation}
the associated linear Schr\"{o}dinger equation for the `physical' many-body state $\Psi_{N,t}^{\mathrm{phys}}$ becomes 
\begin{equation*}
\ii\hbar\partial_t \Psi_{N,t}^{\mathrm{phys}}\;=\;H^{\mathrm{phys}}_{N_{\mathrm{exp}}} \Psi_{N,t}^{\mathrm{phys}}\,,
\end{equation*}
and the effective spinor Hartree system \eqref{eq:MF_system_compact} for the `physical' one-body orbital $\phi_t^{\mathrm{phys}}=\begin{pmatrix}
u_t^{\mathrm{phys}}\\ v_t^{\mathrm{phys}}\\ w_t^{\mathrm{phys}}
\end{pmatrix}$ takes the form
\begin{equation} \label{eq:MF_system_compact_phys}
\begin{split}
\ii\hbar\partial_t\phi_t^{\mathrm{phys}}\;&=\;-\frac{\:\hbar^2}{2m}\Delta\phi_t^{\mathrm{phys}}+W*\langle\phi_t^{\mathrm{phys}},\phi_t^{\mathrm{phys}}\rangle_{\mathbb{C}^3}\phi_t^{\mathrm{phys}}\\
& \qquad\qquad +V*\langle\phi_t^{\mathrm{phys}},\,\pmb{\sigma}\phi_t^{\mathrm{phys}}\rangle_{\mathbb{C}^3}\bullet\,\pmb{\sigma}\phi_t^{\mathrm{phys}}\,.
\end{split}
\end{equation}
Let us stress that \eqref{eq:HNphys} is \emph{not} a mean-field Hamiltonian: the factor $N_\mathrm{exp}^{-1}$ appears explicitly because of the present choice of the potentials $W$ and $V$, that are such that  $N_\mathrm{exp}^{-1}W$ and $N_\mathrm{exp}^{-1}V$ are the physical two-body potentials; then, when $N>N_\mathrm{exp}$, \eqref{eq:meanfieldHN} provides the mean-field re-scaled version of the physical Hamiltonian.

Carrying the physical constants over the various steps of Section \ref{sect:proof} for the proof of Theorem \ref{thm:MFthm} it is straightforward to see that formula \eqref{eq:expansion_alpha} now takes the form
\begin{equation}\label{eq:expansion_alpha_units}
\dot{ \alpha}_{(\Psi_{\!N_{\mathrm{exp}},t}^{\mathrm{phys}},\phi_t^{\mathrm{phys}})}\;\leqslant\;\hbar^{-1}\big( 2(\mathrm{I})+2(\mathrm{II})+\mathrm{III}+\mathrm{ complex}\,\,\mathrm{ conjugate}\big)\,,
\end{equation}
where $(\mathrm{I})$, $(\mathrm{II})$, and $(\mathrm{III})$ have the same expression as in \eqref{eq:three_terms} with $N=N_{\mathrm{exp}}$.
No explicit dependence on the mass $m$ appears, because of the cancellation of the kinetic terms that occurred in \eqref{eq:time_derivative_alpha}. Proceeding along the proof, one finds
\begin{equation} \label{eq:three_terms_phys}
\begin{split}
\big|(\mathrm{I})\big|\;&\leqslant\;\frac{1}{2N_{\mathrm{exp}}}\big(\, \big\|W^{\phi_t^{\mathrm{phys}}}\big\|_{\mathrm{op}}+\big\|V^{\phi_t^{\mathrm{phys}}}\big\|_{\mathrm{op}} \big)\\
\big|(\mathrm{II})\big|\;&\leqslant\;\frac{1}{2}\big(\,\big\|W^{\phi_t^{\mathrm{phys}}}\big\|_{\mathrm{op}}+\big\|V^{\phi_t^{\mathrm{phys}}}\big\|_{\mathrm{op}} +\sqrt{\big\|F_{W,V,\phi_t^{\mathrm{phys}}}\|_{\mathrm{op}}}\,\big)\,\alpha_{(\Psi_{\!N_{\mathrm{exp}},t}^{\mathrm{}},\phi_t^{\mathrm{phys}})}\\
\big|(\mathrm{III})\big|\;&\leqslant\; 2\sqrt{2}\,\sqrt{\big\|F_{W,V,\phi_t^{\mathrm{phys}}}\|_{\mathrm{op}}}\big( \alpha_{(\Psi_{\!N_{\mathrm{exp}},t}^{\mathrm{phys}},\phi_t^{\mathrm{phys}})}+N_{\mathrm{exp}}^{-1}\big)\,.
\end{split}
\end{equation}

For the Gr\"onwall estimate following from \eqref{eq:expansion_alpha_units}-\eqref{eq:three_terms_phys} to take an expression that is quantitatively informative, we need to qualify a physically realistic initial state $\Psi_{\!N_{\mathrm{exp}},0}^{\mathrm{phys}}$ and physically realistic potentials $W_{\mathrm{phys}}$ and $V_{\mathrm{phys}}$.

The many-body initial state must exhibit complete condensation $\Psi_{\!N_{\mathrm{exp}},0}^{\mathrm{phys}}\sim\big(\phi_0^{\mathrm{phys}}\big)^{\otimes N_{\mathrm{exp}}}$ in the quantitative sense \eqref{eq:alpha_at_tzero} for the reduced marginal. To be precise, since by construction $\alpha_{(\Psi_{\!N_{\mathrm{exp}},0}^{\mathrm{phys}},\phi_0^{\mathrm{phys}})}$ expresses the initial \emph{depletion} of the Bose gas (i.e., the fraction of particles that do not participate in the condensation), the constant $K N_\mathrm{exp}^{-1}$ in the bound \eqref{eq:alpha_at_tzero} at $t=0$ must bound from above the experimental value for the depletion. In order to match the typical values of depletion (Table \ref{tab:values}), we take
\begin{equation}\label{eq:alpha0exp}
\alpha_{(\Psi_{\!N_{\mathrm{exp}},0}^{\mathrm{phys}},\phi_0^{\mathrm{phys}})}\;\simeq\;4\cdot 10^{-3}\,.
\end{equation}
We shall qualify $\phi_0^{\mathrm{phys}}$ further in the following.
\renewcommand{\arraystretch}{1.6}
\begin{table}
	\begin{center}
		\begin{tabular}{|c|c|c|}
			\hline
			& \emph{experimental value} & \emph{source} \\
			\hline 
			\hline
			$^{87}\mathrm{Rb}$ atomic mass & $1.42\cdot 10^{-25}\; \mathrm{Kg}$ &\\
			\hline
			scattering lengths  & $\begin{array}{c}
			a_0=58.2 \mbox{{\AA}},\quad a_2=56.6\mbox{{\AA}} \\
			\Rightarrow \;\;c_0=57.1\mbox{{\AA}},\quad c_2=-0.53 \mbox{{\AA}}
			\end{array}$  & \cite{Ho-1998} \\
			\hline
			condensate population & $N_\mathrm{exp}=3\cdot 10^4\div 3\cdot 10^5$ & \cite{Chang-etAl-PRL2004spinor,Chang-Qin-Zhang-You-2005}\\
			\hline
			$\begin{array}{c}
			\textrm{condensate density }(n) \\
			\textrm{and depletion } (\alpha_0)
			\end{array}$ & $\begin{array}{c} n=10^{20} \:\mathrm{m}^{-3} \\ \Rightarrow\;\alpha_0=4\cdot 10^{-3}\end{array}$ & \cite{Chang-etAl-PRL2004spinor,Chang-Qin-Zhang-You-2005,Lopes-et-al-PRL2017} \\
			\hline
			condensate size & $R=10^{-4}\:\mathrm{m}$ & \cite{Chang-etAl-PRL2004spinor,Chang-Qin-Zhang-You-2005} \\
			\hline
			equilibration time & $T\lesssim 0.6\:\mathrm{sec}$  & \cite{Ketterle_StamperKurn_SpinorBEC_LesHouches2001,Chang-etAl-PRL2004spinor,Chang-Qin-Zhang-You-2005} \\
			\hline
		\end{tabular} \vspace{0.2cm}
		\medskip
		\caption{Experimental values of relevant quantities in typical modern experiments with the dynamical evolution of spinor condensates.}\label{tab:values}
	\end{center}
\end{table}
\renewcommand{\arraystretch}{1}

Concerning the potentials $W_{\mathrm{phys}}$ and $V_{\mathrm{phys}}$, for a first rough estimate it is enough to only consider the former and neglect the latter, since in a typical spinor condensate with $^{87}$Rb atoms the scattering length $c_2$ of $V_{\mathrm{phys}}$ is by far dominated by the scattering length $c_0$ of $W_{\mathrm{phys}}$ (Table \ref{tab:values}). In a crude approximation we model $W_{\mathrm{phys}}$ as the soft-sphere potential
\begin{equation}\label{eq:softsphereW}
W_{\mathrm{phys}}(x)\;:=\;\begin{cases}
W_0 & |x|\; < \; R\\
0 &|x|\,\geqslant\;R\,,\\\end{cases}
\end{equation}
with a radius $R$ that we take to be of the order of the condensate size (Table \ref{tab:values}) and a magnitude $W_0$ fixed by the requirement for $W_{\mathrm{phys}}$ to have scattering length $c_0$. An exact calculation based on \eqref{eq:softsphereW} \cite[Eq.~(84.8)]{Fluegge-practicalQM} shows that
\begin{equation*}
c_0= R\Bigg[ 1-\frac{\tanh\Big(\sqrt{\frac{W_0 m}{N_{\mathrm{exp}}\hbar^2}}R\Big)}{\sqrt{\frac{W_0 m}{N_{\mathrm{exp}}\hbar^2}}R} \Bigg]\,,
\end{equation*}
from which one can compute $W_0$ given $c_0$. In fact, it is straightforward to check by plugging the experimental values in the above formula that already the first Born approximation of $c_0$, i.e.,
\begin{equation}
8\pi c_0 \;\simeq\; \frac{2 m}{\,\hbar^2N_{\mathrm{exp}}}\int_{\mathbb{R}^3}\!W_{\mathrm{phys}}(x)\,\ud x \;=\; \frac{8\pi m W_0R^3}{3 \hbar^2 N_{\mathrm{exp}}}\,,
\end{equation}
yields a result that is very close to the exact one; thus, we take
\begin{equation}\label{eq:W0approx}
W_0\;\simeq\;\frac{\,3\,c_0\,\hbar^2N_{\mathrm{exp}}}{mR^3}\,.
\end{equation}
With the data of Table \ref{tab:values}, taking for concreteness $N_\mathrm{exp}\simeq 10^5$, we find $W_0\simeq 1.34\cdot 10^{-34}\,\mathrm{J}$, and hence
\begin{equation}\label{eq:W0hbar}
W_0/\hbar\;\simeq\; 1.3\cdot\mathrm{sec}^{-1}\,.
\end{equation}

\subsection{Estimates on the $\alpha$-indicator
}

With these data at hand, we turn back to the Gr\"onwall estimate that follows from \eqref{eq:expansion_alpha_units}-\eqref{eq:three_terms_phys}. Since we are neglecting $V_{\mathrm{phys}}$, the operator $F_{W,V,\phi_t^{\mathrm{phys}}}$ takes precisely the expression of $D_{W,\phi_t^{\mathrm{phys}}}$ and we find 
\begin{equation} \label{eq:estimate_V=0}
\begin{split}
\dot \alpha_{(\Psi_{\!N_{\mathrm{exp}},t}^{\mathrm{phys}},\phi_t^{\mathrm{phys}})}\;&\leqslant\;\hbar^{-1}\big(  2\,\big\|W^{\phi_t^{\mathrm{phys}}}\big\|_{\mathrm{op}}+(4\sqrt2+2)\sqrt{\big\|D_{W_\mathrm{phys},\phi_t^{\mathrm{phys}}}\big\|_{\mathrm{op}}}\,\big)\;\times\\
&\qquad\times\big(\alpha_{(\Psi_{\!N_{\mathrm{exp}},t}^{\mathrm{phys}},\phi_t^{\mathrm{phys}})}+N_{\mathrm{exp}}^{-1}\big)\,.
\end{split}
\end{equation}
With our choice \eqref{eq:softsphereW} for $W_{\mathrm{phys}}$, a Young inequality in \eqref{eq:smeared potentials} yields 
\begin{equation}\label{eq:young_bounded}
\big\|W^{\phi_t^{\mathrm{phys}}}\big\|_{\mathrm{op}}\;\leqslant\; \|W_{\mathrm{phys}}\|_{L^\infty(\mathbb{R}^3)}\,\|\phi_t^{\mathrm{phys}}\|_{\mathfrak{h}}^2\;=\;W_0
\end{equation}
and similarly
\begin{equation}\label{eq:young_bounded-2}
\big\|D_{W_\mathrm{phys},\phi_t^{\mathrm{phys}}}\big\|_{\mathrm{op}}\;\leqslant\;W_0^2\,.
\end{equation}

Plugging \eqref{eq:young_bounded} and \eqref{eq:young_bounded-2} into \eqref{eq:estimate_V=0}, and applying a standard Gr\"onwall argument, we obtain
\begin{equation}\label{eq:alpha_phys1}
\alpha_{(\Psi_{\!N_\mathrm{exp},t}^{\mathrm{phys}},\phi_t^{\mathrm{phys}})}\;\leqslant\;\big( \alpha_{(\Psi_{\!N_\mathrm{exp},0}^{\mathrm{phys}},\phi_0^{\mathrm{phys}})}+N_\mathrm{exp}^{-1}\big)\cdot e^{10\,t\,W_0/\hbar}\,.
\end{equation}
With $N_\mathrm{exp}$ taken from Table \ref{tab:values}, the initial value given by \eqref{eq:alpha0exp}, and $W_0/\hbar$ estimated as in \eqref{eq:W0hbar}, we see that formula \eqref{eq:alpha_phys1} produces a control on the indicator of condensation that for times $t\simeq 100\,\mathrm{msec}$, namely of the same order of the duration time of the experiment, is as accurate as
\begin{equation} 
\alpha_{(\Psi_{\!N,t}^{\mathrm{phys}},\phi_t^{\mathrm{phys}})}\;\leqslant\;0.015\,,
\end{equation}
thus less than 2\%.

As crude as the above estimate is, it shows that the mean-field scaling produces quite an accurate control of the dynamical persistence of condensation, when specialised with the actual experimental values.

In the above computation, we turned estimate \eqref{eq:alphadot_estimate} into the quantitative form \eqref{eq:estimate_V=0} and then we quantified the operator norms in \eqref{eq:estimate_V=0} through the bounds \eqref{eq:young_bounded}-\eqref{eq:young_bounded-2}. However, this neither exploited the fast spatial decay (in fact, the  finiteness of the support) of the two-body potential \eqref{eq:softsphereW}, nor the relatively short duration of a typical experiment, and by means of such two features a somewhat more refined estimate is possible.

To this aim, we control the operator norms in \eqref{eq:estimate_V=0} through the $L^p$-norm of $W_\mathrm{phys}$, that is,
\begin{equation} \label{eq:norms_W}
\|W_\mathrm{phys}\|_{L^p(\mathbb{R}^3)}\;=\;W_0\Big(\frac{4}{3}\pi R^3\Big)^{1/p}\,.
\end{equation}
Applying Young's inequality and the quantitative Sobolev inequality
\begin{equation}\label{eq:quantSob}
{\textstyle\frac{3}{4}(2\pi^2)^{\frac{2}{3}}}\|f\|_{L^{6}(\mathbb{R}^3)}^2\;\leqslant\;\|\nabla f\|_{L^2(\mathbb{R}^3)}^2
\end{equation}
to the definition  \eqref{eq:smeared potentials}, we find 
\begin{equation}\label{eq:newbounds}
\begin{split}
\big\|W^{\phi_t^{\mathrm{phys}}}\big\|_{\mathrm{op}}\;&\leqslant\;0.18\;\|W_\mathrm{phys}\|_{L^{\frac{3}{2}}(\mathbb{R}^3)}\,\|\nabla\phi_t^{\mathrm{phys}}\|_{\mathfrak{h}}^2\\
\big\|D_{W_\mathrm{phys},\phi_t^{\mathrm{phys}}}\big\|_{\mathrm{op}}\;&\leqslant\;0.18\;\|W_\mathrm{phys}\|_{L^3(\mathbb{R}^3)}^2\,\|\nabla\phi_t^{\mathrm{phys}}\|_{\mathfrak{h}}^2\,,
\end{split}
\end{equation}
and plugging \eqref{eq:newbounds} into \eqref{eq:estimate_V=0} now yields
\begin{equation}\label{eq:alpha_phys2_partial}
\begin{split}
&\dot \alpha_{(\Psi_{\!N_{\mathrm{exp}},t}^{\mathrm{phys}},\phi_t^{\mathrm{phys}})}\;\leqslant\;\hbar^{-1}\big( \alpha_{(\Psi_{\!N_{\mathrm{exp}},t}^{\mathrm{phys}},\phi_t^{\mathrm{phys}})}+N_\mathrm{exp}^{-1}\big)\;\times \\
&\qquad\times\big(0.37\,\|W_\mathrm{phys}\|_{L^{\frac{3}{2}}(\mathbb{R}^3)}\|\nabla\phi_t^{\mathrm{phys}}\|_{\mathfrak{h}}^2+3.24\:\|W_\mathrm{phys}\|_{L^3(\mathbb{R}^3)}\|\nabla\phi_t^{\mathrm{phys}}\|_{\mathfrak{h}}\big).\!\!
\end{split}
\end{equation}

With the value $R\simeq 10^{-4}\,\mathrm{m}$ (Table \ref{tab:values}) we compute from \eqref{eq:norms_W}
\begin{equation}\label{eq:WnormsNumeric}
\begin{split}
\|W_\mathrm{phys}\|_{L^{\frac{3}{2}}}\;&\simeq\; W_0\cdot 2.6\cdot 10^{-8}\;\mathrm{m}^2\\
\|W_\mathrm{phys}\|_{L^{3}}\;&\simeq\;W_0\cdot 1.6\cdot 10^{-4}\;\mathrm{m}\,.
\end{split}
\end{equation}

Concerning the estimate of $\|\nabla\phi_t^{\mathrm{phys}}\|_{\mathfrak{h}}$, it is realistic to take the one-body orbital onto which the gas condensates at $t=0$ of the Gaussian form
\begin{equation}
\phi_0^{\mathrm{phys}}(x)\;=\;\frac{1}{\sqrt{3}\,(\pi\sigma^2)^{\frac{3}{4}}}\,e^{-|x|^2/(2\sigma^2)}\!\begin{pmatrix} 1 \\ 1 \\ 1 \end{pmatrix},
\end{equation}
as is the case when the condensate is prepared in a harmonic trap, with a width that we may take of the same order of the size of the condensate, say $\sigma=R\simeq 10^{-4}\,\mathrm{m}$. With this choice, 
\begin{equation} \label{eq:nabla_phi_0}
\|\nabla\phi_0^{\mathrm{phys}}\|_{\mathfrak{h}}\;=\;\sqrt{\frac{3}{2\sigma^2}}\;\simeq \;1.2\cdot 10^{4}\:\mathrm{m}^{-1}\,.
\end{equation}

Even though the quantity $\|\nabla\phi_t^{\mathrm{phys}}\|_\mathfrak{h}$ deteriorates in time, it clearly remains uniformly bounded, as a consequence of the conservation of the Hartree functional \eqref{eq:Hfunctional}, which takes here the form 
\[
\mathcal{E}_{\mathrm{phys}}^{\mathrm{H}}[\phi_t^{\mathrm{phys}}]\;\approx\;\frac{\:\hbar^2}{2m}\big\|\nabla\phi_t^{\mathrm{phys}}\big\|_{\mathfrak{h}}^2+\frac{1}{2}\langle\phi_t^{\mathrm{phys}},W_{\mathrm{phys}}^{\phi_t^{\mathrm{phys}}}\phi_t\rangle_\mathfrak{h}
\]
(having neglected $V_{\mathrm{phys}}$ in comparison to $W_\mathrm{phys}$,)
For instance, extracting $\|W^{\phi_t^{\mathrm{phys}}} \|_{\mathrm{op}}\leqslant W_0$ as in \eqref{eq:young_bounded}, one finds
\begin{equation*}
\|\nabla\phi_t^{\mathrm{phys}}\|_\mathfrak{h}^2\;\leqslant\;\|\nabla\phi_0^{\mathrm{phys}}\|_\mathfrak{h}^2+\frac{\,2mW_0}{\hbar^2}
\end{equation*}
uniformly in $t$. However, the above bound, that follows from the sole energy conservation, is still too crude: indeed, since $(\frac{\,2mW_0}{\hbar^2})^{\frac{1}{2}}\simeq 5.9\cdot 10^4\;\mathrm{m}^{-1}$, then 
\begin{equation}\label{eq:L2nablaphit}
\|\nabla\phi_t^{\mathrm{phys}}\|_{\mathfrak{h}}\;\lesssim \;6\cdot 10^{4}\:\mathrm{m}^{-1}\,,
\end{equation}
and plugging \eqref{eq:WnormsNumeric} and \eqref{eq:L2nablaphit} into \eqref{eq:alpha_phys2_partial}, and applying a standard Gr\"onwall argument, yields
\begin{equation}\label{eq:alpha_phys2}
\alpha_{(\Psi_{\!N_\mathrm{exp},t}^{\mathrm{phys}},\phi_t^{\mathrm{phys}})}\;\leqslant\;\big( \alpha_{(\Psi_{\!N_\mathrm{exp},0}^{\mathrm{phys}},\phi_0^{\mathrm{phys}})}+N_\mathrm{exp}^{-1}\big)\cdot e^{\,66\,t\,W_0/\hbar}\,,
\end{equation}
which is qualitatively of the same type of, but does not improve, the previous bound \eqref{eq:alpha_phys1}.

Instead, let us estimate $\|\nabla\phi_t^{\mathrm{phys}}\|_\mathfrak{h}$ by monitoring its time evolution. The integral (Duhamel) form for the spinor Hartree equation \eqref{eq:MF_system_compact_phys}, with $V_{\mathrm{phys}}$ neglected as compared to $W_\mathrm{phys}$, reads
\begin{equation}\label{eq:physDu}
\phi^{\mathrm{phys}}_t\;\approx\;e^{\ii\frac{\hbar t}{2m}\Delta}\phi^{\mathrm{phys}}_0-\frac{\ii}{\,\hbar}\int_0^t e^{\ii\frac{\hbar (t-s)}{2m}\Delta}\big((W_{\mathrm{phys}}*|\phi^{\mathrm{phys}}_s|^2)\phi^{\mathrm{phys}}_s\big)\,\ud s\,,
\end{equation}
whence
\begin{equation}\label{eq:DuNabla}
\|\nabla\phi_t^{\mathrm{phys}}\|_\mathfrak{h}\;\leqslant\; \|\nabla\phi_0^{\mathrm{phys}}\|_\mathfrak{h}+\frac{1}{\,\hbar}\int_0^t\big\|\nabla \big((W_{\mathrm{phys}}*|\phi^{\mathrm{phys}}_s|^2)\phi^{\mathrm{phys}}_s\big)\big\|_\mathfrak{h}\,\ud s\,.
\end{equation}
By means of H\"{o}lder's and Young's inequalities, and using $\|\phi_s^{\mathrm{phys}}\|_\mathfrak{h}=1$, it is straightforward to find
\begin{equation*}
\begin{split}
\big\|\nabla\big((&W_{\mathrm{phys}}*|\phi^{\mathrm{phys}}_s|^2)\phi^{\mathrm{phys}}_s\big)\big\|_\mathfrak{h} \\
&\leqslant\;\|(W_{\mathrm{phys}}*\nabla|\phi^{\mathrm{phys}}_s|^2)\phi^{\mathrm{phys}}_s\|_\mathfrak{h}+\|(W_{\mathrm{phys}}*|\phi^{\mathrm{phys}}_s|^2)\nabla\phi^{\mathrm{phys}}_s\|_\mathfrak{h} \\
&\leqslant\;2\|W_{\mathrm{phys}}\|_{L^{\infty}(\mathbb{R}^3)}\|\nabla\phi^{\mathrm{phys}}_s\|_\mathfrak{h}+\|W_{\mathrm{phys}}\|_{L^\infty(\mathbb{R}^3)}\|\nabla\phi^{\mathrm{phys}}_s\|_\mathfrak{h}\,,
\end{split}
\end{equation*}
that is, owing to \eqref{eq:norms_W},
\begin{equation}\label{eq:int_term}
\frac{1}{\,\hbar}\, \big\|\nabla\big((W_{\mathrm{phys}}*|\phi^{\mathrm{phys}}_s|^2)\phi^{\mathrm{phys}}_s\big)\big\|_\mathfrak{h}\;\leqslant\;3\,\frac{\,W_0}{\,\hbar}\|\nabla\phi^{\mathrm{phys}}_s\|_\mathfrak{h}\,.
\end{equation}
Now, plugging \eqref{eq:int_term} into \eqref{eq:DuNabla}, yields
\[
\sup_{t\in[0,T]}\|\nabla\phi_t^{\mathrm{phys}}\|_\mathfrak{h}\;\leqslant\;\|\nabla\phi_0^{\mathrm{phys}}\|_\mathfrak{h}+\frac{3W_0 T}{\hbar}\sup_{t\in[0,T]}\|\nabla\phi_t^{\mathrm{phys}}\|_\mathfrak{h}\,,
\]
where $T$ is the considered duration of the time evolution (see Table \ref{tab:values}), whence, for sufficiently small $T$,
\begin{equation}\label{eq:gen}
\|\nabla\phi_t^{\mathrm{phys}}\|_\mathfrak{h}\;\leqslant\;(1-3\hbar^{-1}W_0 T)^{-1}\|\nabla\phi_0^{\mathrm{phys}}\|_\mathfrak{h}\,,\qquad t\in[0,T]\,.
\end{equation}
Using the estimate \eqref{eq:nabla_phi_0}, recalling that $W/\hbar\simeq 1.3\cdot\mathrm{sec}^{-1}$, and taking $T\simeq 100\;\mathrm{msec}$, we finally find
\begin{equation}\label{eq:L2nablaphit-NEW}
\|\nabla\phi_t^{\mathrm{phys}}\|_\mathfrak{h}\;\leqslant\;1.97\cdot 10^{4}\:\mathrm{m}^{-1}\,,\qquad t\in[0,T]\,,
\end{equation}
which improves \eqref{eq:L2nablaphit}. Plugging \eqref{eq:WnormsNumeric} and \eqref{eq:L2nablaphit-NEW} into \eqref{eq:alpha_phys2_partial}, and applying a standard Gr\"onwall argument, yields
\begin{equation}\label{eq:alpha_phys3}
\alpha_{(\Psi_{\!N_\mathrm{exp},t}^{\mathrm{phys}},\phi_t^{\mathrm{phys}})}\;\leqslant\;\big( \alpha_{(\Psi_{\!N_\mathrm{exp},0}^{\mathrm{phys}},\phi_0^{\mathrm{phys}})}+N_\mathrm{exp}^{-1}\big)\cdot e^{\,14\,t\,W_0/\hbar}\,,\qquad t\in[0,T]\,,
\end{equation}
which is comparable with \eqref{eq:alpha_phys1}.

Moreover, keeping the $T$-dependent estimate \eqref{eq:gen} and reasoning as above, we obtain the control
\begin{equation}\label{eq:alpha_phys4}
\begin{split}
\alpha_{(\Psi_{\!N_\mathrm{exp},t}^{\mathrm{phys}},\phi_t^{\mathrm{phys}})}\;\leqslant&\;\;\big( \alpha_{(\Psi_{\!N_\mathrm{exp},0}^{\mathrm{phys}},\phi_0^{\mathrm{phys}})}+N_\mathrm{exp}^{-1}\big)\cdot e^{\,T \xi(T)\,W_0/\hbar}\,,\qquad t\in[0,T]\,, \\
\xi(T)\;:=&\;\;1.38\,(1-3.9\cdot\mathrm{sec}^{-1}\;T)^{-2}+6.22\,(1-3.9\cdot\mathrm{sec}^{-1}\;T)\,.
\end{split}
\end{equation}
Obviously, for $T\simeq 100\,\mathrm{msec}$ \eqref{eq:alpha_phys4} reproduces \eqref{eq:alpha_phys3} ($\xi(100\,\mathrm{msec})\simeq 14$). But if we consider shorter times -- the most significant and precise part of the experiment takes place within $T\simeq 50\,\mathrm{msec}$ or less \cite{Chang-Qin-Zhang-You-2005} -- then \eqref{eq:alpha_phys4} becomes more accurate than \eqref{eq:alpha_phys1}. 



%
\begin{remark}[Remarks on the units]\label{rem:units}
	For the benefit or the reader let us highlight here a few comments concerning the dimensional computations of this Section, with the convention that $[\mathcal{Q}]$ denotes the units of the quantity $\mathcal{Q}$.

	For the one-body and the many-body wave functions one has obviously $[\phi^{\mathrm{phys}}]=(\mathrm{length})^{-\frac{3}{2}}$ and $[\Psi_N^{\mathrm{phys}}]=(\mathrm{length})^{-\frac{3N}{2}}$; this implies that the quantity $\alpha_{(\Psi_{\!N_\mathrm{exp},t}^{\mathrm{phys}},\phi_t^{\mathrm{phys}})}$ is adimensional.
	
	Convolutions introduce an additional (length)$^{3}$ to the dimension of the involved functions. Thus,
	\[
	\begin{split}
	\big[\,\big\|W_\mathrm{phys}*|\phi_t^{\mathrm{phys}}|^2\big\|_{L^\infty(\mathbb{R}^3)}\big]\;&=\;\big[W_\mathrm{phys}\big] \\
	\big[\,\big\|W^2_\mathrm{phys}*|\phi_t^{\mathrm{phys}}|^2\big\|_{L^\infty(\mathbb{R}^3)}\big]\;&=\;\big[W_\mathrm{phys}\big]^2\,,
	\end{split}
	\]
	which confirms that \eqref{eq:young_bounded} and \eqref{eq:young_bounded-2} are dimensionally correct, and
	\[
	\big[\textrm{r.h.s.~of }\eqref{eq:estimate_V=0}\big]\;=\;[\hbar]^{-1}\big[W_\mathrm{phys}\big]\;=\;(\mathrm{time})^{-1}\;=\;\big[\textrm{l.h.s.~of }\eqref{eq:estimate_V=0}\big]\,.
	\]
	
	Since
	\[
	\begin{split}
	[\|f\|^2_{L^6(\mathbb{R}^3)}]\;&=\;(\mathrm{length})^{-2}\;=\;[\|\nabla f\|^2_{L^2(\mathbb{R}^3)}] \\
	\big[\|W_\mathrm{phys}\|_{L^p(\mathbb{R}^3)}\big]\;&=\;\big[W_\mathrm{phys}\big]\cdot(\mathrm{length})^{\frac{3}{p}}\,,
	\end{split}
	\]
	then \eqref{eq:quantSob} too is dimensionally correct, both sides of \eqref{eq:newbounds} have the dimension of $W_\mathrm{phys}$, and 
	\[
	\big[\textrm{r.h.s.~of }\eqref{eq:alpha_phys2_partial}\big]\;=\;[\hbar]^{-1}\big[W_\mathrm{phys}\big]\;=\;(\mathrm{time})^{-1}\;=\;\big[\textrm{l.h.s.~of }\eqref{eq:alpha_phys2_partial}\big]\,.
	\]

	The integral term in \eqref{eq:physDu} has the units of
	\[
	[\hbar]^{-1}\cdot\big[W_\mathrm{phys}\big]\cdot[\phi_s^{\mathrm{phys}}]^3\cdot(\mathrm{length})^3\cdot[\ud s]\;=\;(\mathrm{length})^{-\frac{3}{2}}
	\]
	(the $(\mathrm{length})^3$-contribution coming from the convolution), and indeed the whole \eqref{eq:physDu} is a $(\mathrm{length})^{-\frac{3}{2}}$. Reasoning in an analogous manner, one sees that both sides in \eqref{eq:int_term} are a $(\mathrm{length})^{-1}(\mathrm{time})^{-1}$, and indeed the whole \eqref{eq:DuNabla} is a $(\mathrm{length})^{-1}$.
	%
	%
\end{remark}

\subsection{Comparison theory/experiment for the $\alpha$-indicator}\label{Sec:comparison-alpha}

\begin{figure}
\begin{center}
\includegraphics[width=9cm]{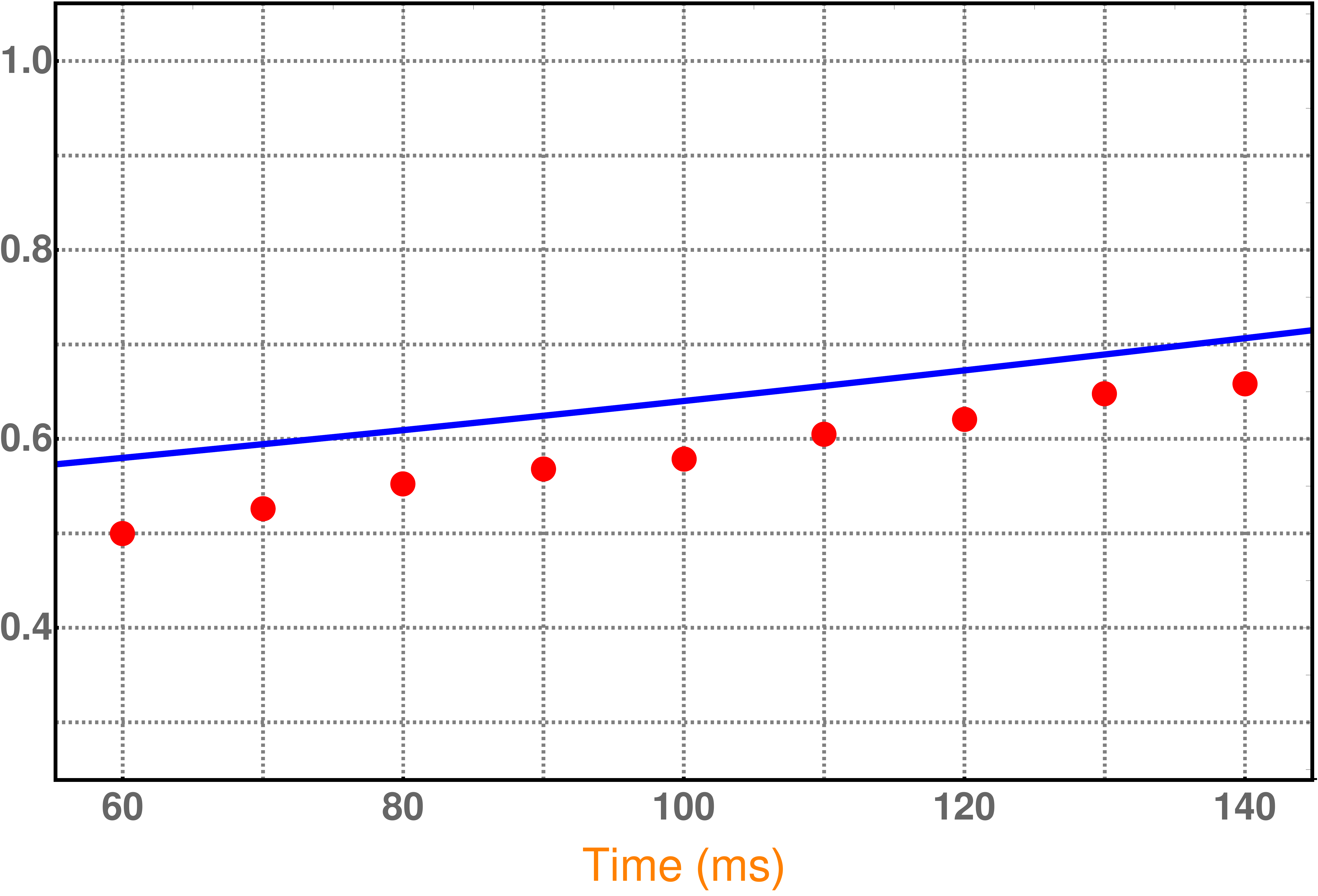}
\caption{Comparison for the behaviour in time of the $\alpha$-indicator of condensation. Blue solid curve: theoretical bound on $\alpha_{(\Psi_{\!N_\mathrm{exp},t}^{\mathrm{phys}},\phi_t^{\mathrm{phys}})}$ according to formula (\ref{eq:alpha_phys1}) with parameters inferred from the experiment \cite[Fig.~5(a)]{Chang-etAl-PRL2004spinor}. Red dotted curve: experimental values of $\alpha_{(\Psi_{\!N_\mathrm{exp},t}^{\mathrm{phys}},\phi_t^{\mathrm{phys}})}$ reconstructed from the data of the same experiment \cite[Fig.~5(a)]{Chang-etAl-PRL2004spinor}. Discussion in Subsection \ref{Sec:comparison-alpha}.} \label{fig:alpha-indicators}
\end{center}
\end{figure}

One last quantitative evidence of the good fidelity of the model with respect to the experiments is the following.

We extracted from the experiment \cite{Chang-etAl-PRL2004spinor} with spinor condensates (already considered for Table \ref{tab:values}) the value of the total number of particles of the spinor gas, $N_{\textrm{exp}}^{\textrm{(tot)}}\sim 1.9
\cdot 10^4$, and the initial number of particles $N_{\textrm{exp}}^{\textrm{(init)}}\sim 9.5 \cdot 10^3$ participating to the spinor condensate when the proper dynamical expansion of the condensate starts after some transient regime, namely after approximatively 60 msec in \cite[Fig.~5(a)]{Chang-etAl-PRL2004spinor}. This allows us to estimate the initial $\alpha$-indicator for Eq.~(\ref{eq:alpha_phys1}) as $\alpha_{(\Psi_{\!N_\mathrm{exp},0}^{\mathrm{phys}},\phi_0^{\mathrm{phys}})}\sim 1-N_{\textrm{exp}}^{\textrm{(init)}}/N_{\textrm{exp}}^{\textrm{(tot)}}\sim 0.50$ and also, through formula (\ref{eq:W0approx}), the updated value $W_0/\hbar\sim 0.247\cdot\mathrm{sec}^{-1}$ for this experiment (which remains comparable with (\ref{eq:W0hbar})). Plugging these parameters into (\ref{eq:alpha_phys1}) we can plot the time evolution of (an upper bound on) the $\alpha$-indicator as \emph{predicted theoretically} by our mathematical scheme for the considered experiment. This is the blue solid curve in Figure \ref{fig:alpha-indicators}.

In comparison to that, we also extracted from the same experiment \cite[Fig.~5(a)]{Chang-etAl-PRL2004spinor} the data on the number of particles $N_{\mathrm{exp}}(t)$ participating in the spinor condensate at later times in the course of the subsequent 80 msec of the dynamical evolution, and estimated correspondingly the \emph{experimental} behaviour of the $\alpha$-indicator as $1-N_{\mathrm{exp}}(t)/N_{\textrm{exp}}^{\textrm{(tot)}}$. These are the red dots in Figure \ref{fig:alpha-indicators}.

We can thus see that the theoretical prediction on the $\alpha$-indicator stays above and satisfactorily close to the experimental values of the same indicator \emph{for all the duration of the considered experiment}. The displacement between the two, besides all the reasonable approximations made so far, can be plausibly explained also in terms of temperature effects: our theoretical analysis is indeed carried on in a zero-temperature formalism, whereas the experimental data are taken at (ultra-cold, yet) finite temperature. Yet the agreement is appreciable and it confirms the good fidelity of the theoretical model for the derivation of the mean-field effective dynamical equations.


\def\cprime{$'$}

\end{document}